\newcolumntype{L}[1]{>{\raggedright\let\newline\\\arraybackslash\hspace{0pt}}m{#1}}
\newcolumntype{C}[1]{>{\centering\let\newline\\\arraybackslash\hspace{0pt}}m{#1}}
\newcolumntype{R}[1]{>{\raggedleft\let\newline\\\arraybackslash\hspace{0pt}}m{#1}}
\def\sqr#1#2{{\vcenter{\vbox{\hrule height.#2pt
  \hbox {\vrule width.#2pt height#1pt \kern#1pt
    \vrule width.#2pt}
  \hrule height.#2pt}}}}
\def\square{\mathchoice\sqr56\sqr56\sqr45\sqr34}
\def\done{\rightline{$\square$}}
\newtheorem{theorem}{Theorem}%[section]
\newtheorem{remark}{Remark}%[section]
\newenvironment{proof}{{\it Proof.}}{\newline \done}
\def\d{\, \mathrm{d}}
\begin{document}

\begin{frontmatter}

\title{Revisiting the random shift approach for testing in spatial statistics}

\author[Budejovice]{Tom\' a\v s Mrkvi\v cka\corref{mycorrespondingauthor}}
\cortext[mycorrespondingauthor]{Corresponding author}
\ead{mrkvicka.toma@gmail.com}

\author[Prague]{Ji\v r\' i Dvo\v r\' ak}
\ead{dvorak@karlin.mff.cuni.cz}

\author[Castellon]{Jonatan A. Gonz\'alez}
\ead{jmonsalv@mat.uji.es}

\author[Castellon]{Jorge Mateu}
\ead{mateu@mat.uji.es}

\address[Budejovice]{Department of Applied Mathematics and Informatics, Faculty of Economics, University of South Bohemia, Studentsk{\'a} 13, 370 05 \v{C}esk\'e Bud\v{e}jovice, Czech Republic}
\address[Prague]{Department of Probability and Mathematical Statistics, Faculty of Mathematics and Physics, Charles University, Sokolovsk\' a 83, 186 75 Prague, Czech Republic}
\address[Castellon]{Department of Mathematics, University Jaume I, Campus Riu Sec, 12071, Castell\'on de la Plana, Castell\'on, Spain.}

%% Here are the title, author names and addresses

\begin{abstract}
{We consider the problem of non-parametric testing of independence of two components of a stationary 
bivariate spatial process. In particular, we revisit the random shift approach that has become a standard method for testing the independent superposition hypothesis in spatial statistics, and it is widely used 
in a plethora of practical applications. However, this method has a problem of li\-be\-ra\-lity caused by breaking the marginal spatial correlation structure due to the toroidal correction. This indeed causes 
that the {assumption} of exchangability, which is essential for the Monte Carlo test to be exact, is not fulfilled.}

{We present a number of permutation strategies and show that the random shift with the variance correction {brings} a suitable improvement {compared to} the torus correction in the random field case. It reduces the liberality and achieves the largest power from all {investigated} variants. To obtain the variance for the variance correction method, several approaches were studied. The best results were achieved}{, for the sample covariance as the test statistics, with the correction factor $1/n$. This corresponds to the asymptotic order of the variance of the test statistics.}
%order of variance $1/n$ of the sample covariance.} 

{In the point process case, the problem of deviations from exchangeability is far more complex and we propose an alternative strategy based on the mean cross nearest-neighbor distance and torus correction. It reduces the liberality but achieves slightly lower power than the {usual} cross $K$-function. }
{Therefore we recommend it, when the point patterns are clustered, where the cross $K$-function achieves liberality.} 

\end{abstract}

\begin{keyword}
{Bivariate patterns, Independence, Random field, Random shift, Spatial point patterns, Torus {correction}, Variance correction}

\MSC[2010] 62G10 \sep 60G55 \sep 60G60
\end{keyword}

\end{frontmatter}
%\linenumbers

\section{Introduction}
In this paper we consider the problem of non-parametric testing of independence of two components of a stationary bivariate spatial process -- a pair of stationary random fields defined on the same domain or a stationary bivariate point process. The most common setting is that only one realization of the bivariate process is observed, rather than independent replicates, and this is the setting that we employ in this paper. Note that in the point process case we mean by the hypothesis of independence the so-called random superposition hypothesis (superposition of two independent populations of points) rather than the so-called random labeling hypothesis (points of a single process are independently assigned to the two groups), see \citet{Diggle2010a}.

If the domain of observation $W$ is rectangular, a popular approach is to wrap the data onto a torus by identifying the opposite edges of $W$ \citep[p.311]{Diggle2010a}. A test statistic $T_0$ is calculated from the observed data, then one component is repeatedly randomly shifted around the torus obtaining a set of values $T_1, \ldots, T_N$ of the test statistic from the shifted data. In this way we obtain an approximation of the distribution of the test statistic under the null hypothesis of independence, without assuming any parametric model for the data. The test of independence is then performed by assessing how typical or extreme the value $T_0$ is w.r.t. the population $T_1, \ldots, T_N$, obtaining a p-value in the classical Monte Carlo fashion.
    
The reasoning behind this approach is that by randomly shifting one component of the data we break the possible dependence structure between the components while keeping the dependence structure within each component intact. Hence the test is conditional on the observed marginal structure. The toroidal correction is applied so that no data is discarded. If this is not an issue, e.g. in case of a very large data set, the minus correction, also called the border correction, can be applied instead, avoiding any issues discussed below.
    
Focusing now on the random field case, the random shift approach was suggested in literature  by \citet{UptonFingleton1985} and popularized by e.g. \citet{DaleFortin2002}. However, it was noted already in \citet{FortinPayette2002} that this procedure can be too liberal (rejecting too often under the null hypothesis) and therefore parametric methods are usually used in this context. It is possible to use e.g. the Moran eigenvalue regression \citep{DrayEtal2006}, spatial cross correlation \citep{Chen2015} or spatial regression \citep{Cressie1993}. Often the Mantel test was used but \citet{GuillotEtal2013} and \citet{LegendreEtal2015} demonstrated that this should not be used in the spatial context. {Furthermore, a fully parametric approach is followed in \citet{Bevilacqua2015} and parametric extensions to image processing and multivariate geostatistics can be found in \citet{Vallejos2008,Vallejos2012} and \citet{Vallejos2015}. However, we focus in this paper on fully non-parametric approaches that can be generally applied to a large body of problems and data sets.}
	
The Monte Carlo test relies on the exchangeability of $(T_0,T_1, \ldots, T_N)$, i.e. the joint distribution of the random vector must not be changed by any permutation of its elements. This property ensures that the test has the required significance level under the null hypothesis.
The liberality of the random shift test is caused by breaking the {marginal} spatial correlation structure due to the toroidal correction.
Since $T_0$ is the only value computed from data \emph{not} having a break in the spatial correlation structure, the vector of test statistics $(T_0, T_1, \ldots, T_N)$ is not exchangeable. $T_0$ is too often considered to be extreme when compared to $T_1, \ldots, T_N$, implying liberality of such a test. For an illustration see Figure~\ref{fig:histograms} (left).
In fact, when the shifted random field is observed in a larger domain so that no toroidal correction is needed, the liberality is not observed. This is illustrated in Figure~\ref{fig:histograms} {(middle)}. 
	
{In this paper we introduce a new resampling strategy based on the random shift and a test statistics computed from the intersection of the window and the shifted window, which is further {standardized to have constant variance}. This resampling strategy is denoted as random shift with variance correction and the reduction of liberality by this strategy is illustrated in Figure~\ref{fig:histograms} {(right)}.}
	
Another non-parametric approach for testing the independence of a pair of random fields is described in \citet{Viladomat2014}. Just as in the random shift approach, one component field is left intact. For the second component field the observed values are randomly permuted among the observation points and then smoothed and scaled so that the smoothed empirical variogram of the permuted data matches the smoothed {empirical}
%smoothed
variogram of the original data. In this way the second-order structure of the random field is presumably matched but there is no hope that the distribution of the random field is unchanged. {Hence this approach can be viewed as a parametric bootstrap, where the bootstrapped random fields are given by the allowed difference of the smoothed variogram of the permuted data and the original variogram.} 

	%Hence this test is not conditional on the observed marginal distributions.
	%The Viladomat approach is not completely nonparametric, since it requires to estimate the variogram of the data which is then applied on permuted data to recover the autocorrelation structure.
	
\begin{figure}[tb]
    \includegraphics[width=\textwidth]{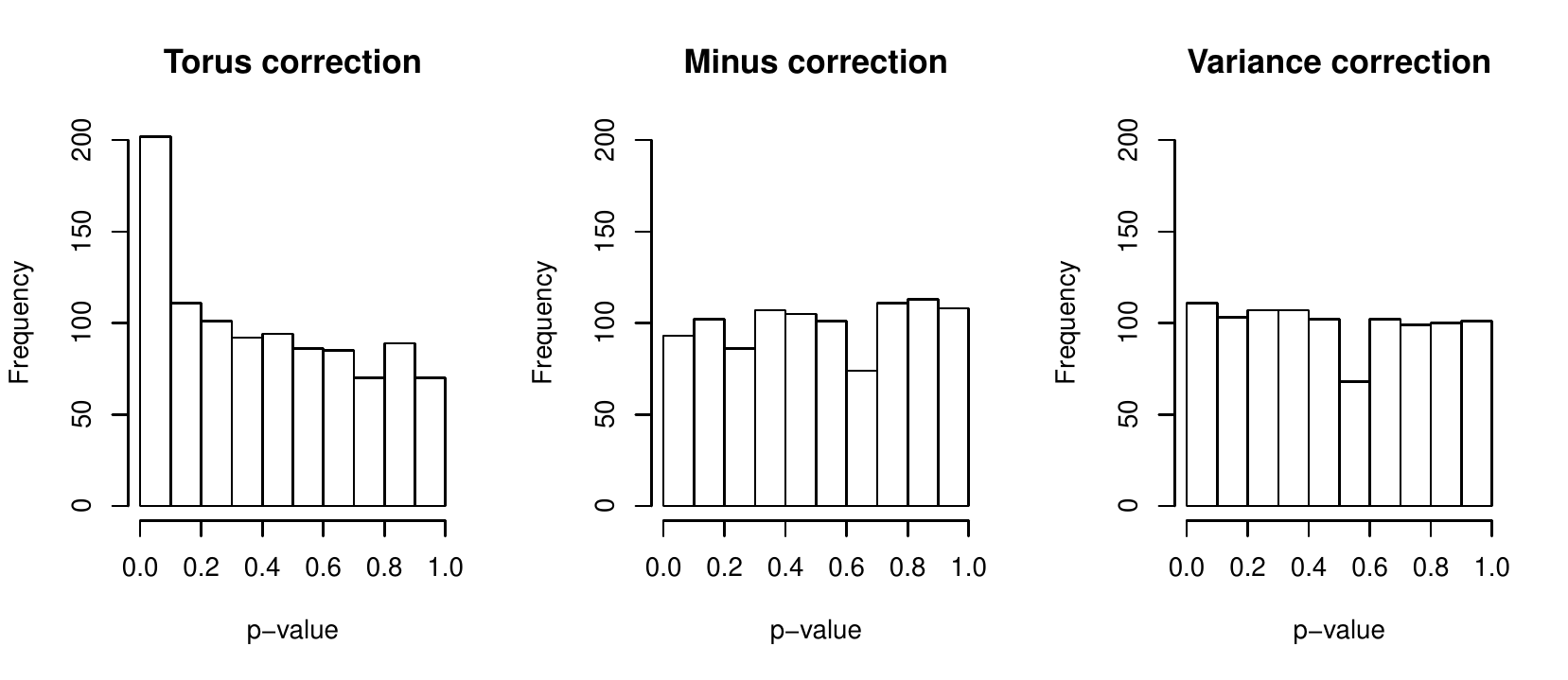}
    \caption{Histograms showing the approximate distribution of p-values for different tests. Based on 1000 replications of a test based on a pair of independent zero-mean unit-variance Gaussian random fields with an exponential correlation function with scale parameter equal to 0.5 (see Section~\ref{sec:simulationsRF} for details).}
    \label{fig:histograms}
\end{figure}

In point process literature the random shift approach was suggested already in \citet{Lotwick1982} who claim on p.410 that ``\ldots it seems intuitively that any statistics calculated after wrapping onto the torus will show less discrepancy from independence, and therefore that spurious significance should not be introduced.'' However, Figure~\ref{fig:histograms} {(left)} and the rest of {this} paper prove the opposite. 
    
The random shift approach has become a standard method for testing the independent superposition hypothesis, see e.g. \citet{Diggle2003,Diggle2010a} or \citet{GrabarnikEtal2011}, but no liberality was mentioned in this context. The random shift approach has seen practical applications e.g. in \citet{SCHLEICHER2011270}, \citet{Fedriani2010} or \citet{Felinks2009}. We show in this paper that also for point processes with long dependence range, the liberality of this test can be surprisingly prominent. To the best of our knowledge there are no parametric methods in the point pattern literature easily available, except of the direct modeling of the bivariate point pattern.

{The paper is structured as follows. Section 2 describes a number of random shift strategies and how they particularize for the random field and point process cases. We prove some theoretical properties for the test statistics based on the sample covariance and for the moment properties of quantities needed to estimate the variance of the cross $K$-function. Simulation experiments come in Section 3, analyzing empirical rejection rates and power of the tests under a variety of scenarios. An application is presented in Section 4. The paper ends with a final discussion.}

%In the following we try to avoid making random shifts with toroidal correction since artifacts are created and the correlation structure of the data is corrupted. 

\section{Description of methods} %Tomas

Assume that we observe two spatial processes $\Phi, \Psi$ in {the same}
%only one 
observation window $W$. To test the independence between $\Phi$ and $\Psi$ by a permutation method, it would be useful to have independent replicates of the data satisfying the null hypothesis of independence. This is often possible in a parametric setting, using simulations from the fitted model. However, when we restrict ourselves to non-parametric methods, this is not possible. Instead it is important to find a replication strategy which mimics the independent replicates as closely as possible.

Assume that the test is based on the test statistic $T=T(\Phi,\Psi;W)$ and denote by $T_0$ the value computed from the observed data. The test statistic can be scalar, vector or functional, depending on the particular application. As always in the setting of Monte Carlo tests, simulated values $T_1, \ldots, T_N$ are used to approximate the distribution of $T_0$ under the null hypothesis. The outcome of the test is based on how typical or extreme $T_0$ is among the population $T_1, \ldots, T_N$.

\subsection{Random shift strategies}

Below we give an overview of different approaches which share the common framework described above but use different strategies to produce the simulated values $T_1, \ldots, T_N$.

\subsubsection{Torus correction}

The well-established strategy of the \textit{random shift with torus correction} (denoted RS$_{torus}$ in the following) assumes that the window $W$ is rectangular and that $\Psi$ is shifted by a random vector $v$, respecting the toroidal geometry of $W$. Such a toroidal shift makes a crack in the autocorrelation structure of $\Psi$ which causes the liberality of the procedure reported in \citet{FortinPayette2002}. 
Denoting $v_1, \ldots, v_N$ the i.i.d. random vectors used for the shifts and $\left[ \Psi + v_i \right]$ the process $\Psi$ shifted by the $v_i$, respecting the toroidal geometry of $W$, the simulated values are $T_i = T(\Phi,\left[ \Psi + v_i \right];W), i=1,\ldots,N$.

\subsubsection{Minus correction}

The \textit{random shift with minus correction} strategy (denoted RS$_{minus}$ in the following) employs erosion of the window $W$ to $W_c \subset W$ and works only with the restriction of the processes to $W_c$. The distribution of the random shift vectors $v$ is now required  to satisfy $W_c - v \subset W$ almost surely. The simulated values are $T_i = T(\Phi|_{W_c},(\Psi+v_i)|_{W_c};W_c), i=1,\ldots,N$.
In this setting, no cracks in the autocorrelation structure of $\Psi$ are introduced and the desired significance level of the test is achieved. On the other hand, due to the amount of discarded data, the power of the test is small. We remark that in this case the test statistic value computed from the observed data is $T_0 = T(\Phi|_{W_c},(\Psi)|_{W_c};W_c)$. Note that the values $T_i, i=0,\ldots,N,$ are computed from the same amount of data (the observation window $W_c$ is the same for all $i=0,\ldots,N$) and hence are directly comparable.

\subsubsection{Variance correction}

We propose a new strategy which we call the \textit{random shift with variance correction}. It also avoids the cracks in the autocorrelation structure. For the shift vectors $v_1, \ldots, v_N$ denote $W_i = W \cap (W + v_i), i = 1, \ldots, N,$ the smaller window where both the information about $\Phi$ and $(\Psi+v_i)$ is available. The first step consists of producing the simulated values $T_i = T(\Phi|_{W_i},(\Psi+v_i)|_{W_i};W_i), i=1,\ldots,N$. Unlike the previous correction, the values $T_0,\ldots,T_N$ are now computed from different amounts of data (windows $W, W_1,\ldots,W_N$ are different) and hence are not directly comparable. Clearly, the variability of $T_i$ is higher for $W_i$'s with smaller volume.

Thus in the second step, the observed value $T_0$ and the simulated values $T_1,\ldots,T_N,$ are standardized to have zero mean and equal variance, i.e. we subtract the overall mean $\overline{T} = \frac{1}{N+1} \sum_{i=0}^N T_i$ and divide by $\sqrt{\text{var} (T_i)}$:
\begin{align*}
    S_i = \left( T_i - \overline{T} \right) / \sqrt{\text{var} (T_i)}, \quad i = 0, \ldots, N.
\end{align*}
The transformed values are then used to produce the outcome of the test in the classical Monte Carlo fashion.

If we know, at least asymptotically, how the variance of the test statistic depends on the volume of the observation window (or the number of observed points, depending on the particular type of data), a precise formula for the variance can be used.
%Other possible choice of standardization is applicable if we know (at least asymptotically) how the variance of the test statistic depends on the volume of the observation window (or the number of observed points, depending on the particular type of data). 
%It is possible to use the non-parametric kernel regression approach to estimate the variability of $T_i$ based on the values of $T_j$'s for which the shift vectors $v_j$ are close to $v_i$. 

When the information about the variability of the test statistic is not available, a general approach based on non-parametric kernel regression can be used, where the shift vectors $v_i$ are treated as the explanatory variables. More formally, let $W_0 = W$ and $v_0=o$ be the origin, corresponding to the zero shift. After computing the values $T_0, \ldots, T_N$ based on windows $W_0, W_1, \ldots, W_N$ and shift vectors $v_0, v_1, \ldots, v_N,$ we estimate the variance of $T_i, i=0,\ldots,N,$ using the Nadaraya-Watson estimator {\citep{FanGijbels1996}}:
\begin{align}\label{eq:NW}
    \widehat{\text{var}} (T_i) = & \sum_{k=0}^N \left( T_k - \overline{T} \right)^2 \cdot w_{ik}, \quad i = 0, \ldots, N, \\ w_{ik} = & \frac{K\left( \frac{\|v_i - v_k\|}{h} \right)}{\sum_{j=0}^N K\left( \frac{\|v_i - v_j\|}{h}\right) }, \quad i, k = 0, \ldots, N,
\end{align}
where $K$ is a one-dimensional kernel function, such as the Epanechnikov kernel, and $h>0$ is the bandwidth.

%For the random shift approach with variance correction, the following standardization can hence be used. After computing the values $T_0, \ldots, T_N$ based on windows $W_0=W, W_1, \ldots, W_N$ and $n_0=|X|, n_1, \ldots, n_N$ sampling points in the corresponding windows, 

After the standardization, the random variables $S_0, \ldots, S_N$ have the same first and second moments. The test of independence of $\Phi$ and $\Psi$ can be based on assessing how extreme is the value of $S_0$ with respect to the set of values $S_1, \ldots, S_N$. 

{For normally distributed random variables the above standardization {ensures}
%makes 
that marginals $S_1, \ldots, S_N$ follow the same distribution; in general, the standardized random variables achieve that only approximately. In fact, the exchangeability itself is not satisfied for any of the random shift strategies due to the different correlations between different shifts. Therefore, the property of marginals having the same first and second moments can be considered as a satisfactory property to perform the Monte Carlo test. The empirical significance levels of all strategies are further studied by a simulation study and it is shown that the random shift with variance correction achieves an acceptable significance level in various studied cases.}
% This is more complex Jiri {Exchangeability requires the same distribution of all marginals. For normally distributed random variables the above standardization achieves that but in general the standardized random variables are only approximately exchangeable.}
%For the above approximation, the normality of $T_i$ is required for the exact approximation of the quantiles of $S_i$. 
%but our simulation experiments shows that the correlation coefficient is better choice (in the sense of liberality), due to its normalization, than the covariance itself even though it is not asymptotically normal. 

%Then the standardized values are
%\begin{align*}
%    S_i =  \left( \widehat{s^2_i} \right)^{-1/2}  \left( T_i - %\overline{T} \right) , \quad i = 0, \ldots, N,
%\end{align*}
%and these are used to determine the outcome of the test. We denote this approach RS$_{global}$ in the following since it uses the global mean $\overline{T}$ for standardization.

%Alternatively, local means $\overline{T_i}$ can be estimated using the Nadaraya-Watson estimator as above and these could be used instead of $\overline{T}$ when computing $\widehat{s^2_i}$ and $S_i, i = 0, \ldots, N$. We denote this approach RS$_{local}$ in the following since it uses the local means $\overline{T_i}$ for standardization.

%The proposed method is not exact due to the {estimation of the variance}
%approximations made in the procedure, 
%but the simulation study below confirms acceptable significance level of this procedure in various studied cases.
We have also studied other replication strategies but with worse results than the random shift {with variance correction}.
%without toroidal correction. 
These inferior strategies are 1) dividing the window into subwindows, permuting the whole subwindows and combining the information together, 2) dividing the window into subwindows, shifting all subwindows independently inside the whole window, and combining the information together, 3) obtaining the information about the liberality of the random shift by a two stage method similar to that of \citet{Baddeley2017}. 

\subsubsection{Shape of the observation window}

The random shift approach with torus correction assumes a rectangular observation window. While it is possible to extend the approach to windows which are finite unions of (aligned) rectangles, such a procedure would increase the amount of cracks in the autocorrelation structure. Subsequently, it would increase the liberality of the test of independence using random shifts with torus correction.

The approaches using minus correction and variance correction can be applied in case of general (compact) observation windows.
%The former approach relies on the fact that the distribution of the shift vectors is required to ensure that the shifted eroded window is almost surely contained in $W$. 
%The latter approach is based on making intersections between the shifted window and the original window. Clearly, it is not even required that $W$ is connected.

\subsection{Random field case}

Assume now that $\Phi$ and $\Psi$ are random fields observed in $W$. Assume also that $\Psi$ is observed in any location, at least on a fine pixel grid. If this is not the case, kriging can be used to provide an estimate of the unobserved values. Let $X$ be the set of sampling locations where the values of $\Phi$ are observed. $X$ can be either random or non-random, depending on the design of the experiment. We denote by $\Phi(X)$ the vector of values of $\Phi$ observed at sampling points $X$, and similarly for $\Psi(X)$.

For testing the independence {assumption} {for} the two random fields $\Phi$ and $\Psi$, a natural choice of the test statistic is the sample covariance $\text{cov}(\Phi(X),\Psi(X))$. It is known in classical statistics that the variance of the sample covariance, computed from an i.i.d. random sample of size $n$, is of order $1/n$. Below we argue that the same holds also in the case with spatial autocorrelation, see Theorem~\ref{T1}. {Hence ${\text{var}}(T_i) \approx C/n_i$ where $n_i$ is the number of sampling locations in $W_i, i=0,1,\ldots,N,$ and $C$ is a constant; thus setting ${\text{var}}(T_i) \approx 1/n_i$ stabilizes the variance of $S_i$}. We denote {such variance correction} approach RS$_{count}$ in the following.

The approach {using a kernel} estimate of the variance of the sample covariance is denoted here by RS$_{ker}$.

Finally, it is possible to estimate the variograms of the random fields and plug these estimates {into} the exact formula for the variance of the sample covariance (not reported here but easily deduced from the formulas in the proof of Theorem~\ref{T1}). Such method is denoted by RS$_{var}$. The random shift with the sample covariance as the test statistic and torus correction is denoted here by RS$_{torus}$ and with minus correction by RS$_{minus}$.

The following theorem justifies the asymptotic approximation $\text{var} (T_i) \approx 1/n_i$ proposed above in the RS$_{count}$ approach, the test statistic being the sample covariance, {under the assumption of independence of $\Phi$ and $\Psi$}.  

\begin{theorem}\label{T1}
Let $\Phi$ and $\Psi$ be two independent stationary random fields on $\mathbb{R}^d$ with finite second moments and non-negative autocovariance functions $C_\Phi, C_\Psi$. Assume that there is a constant $R>0$ such that $C_\Phi(u-v) = C_\Psi(u-v) = 0$ for $\| u - v \| > R$. Let $X=\{ x_i, i \in \mathbb{N} \}$ be a sequence of observation points such that for each point there are at most $K$ other points within distance $R$. Let $s_n, n = 2, 3, \dots,$ be the sample covariance defined as
\begin{align*}
    s_n = \frac{1}{n-1} \sum_{i=1}^n \left( \Phi(x_i) - \bar{\Phi}_n \right) \left( \Psi(x_i) - \bar{\Psi}_n \right)
\end{align*}
where $\bar{\Phi}_n = \frac{1}{n} \sum_{i=1}^n \Phi(x_i)$ and $\bar{\Psi}_n = \frac{1}{n} \sum_{i=1}^n \Psi(x_i)$ are the sample means. Then $\mathbb{E} s_n = 0$ for each $n \in \mathbb{N}$ and
\begin{align*}
    \mathrm{var}\, s_n = \frac{1}{(n-1)^2} \sum_{i=1}^n \sum_{j=1}^n C_\Phi(x_i-x_j) C_\Psi(x_i-x_j) + o(1/n)
\end{align*}
(in the sense that $\frac{o(1/n)}{1/n} \rightarrow 0$ as $n \rightarrow \infty$) and there are constants $0<C_1 \leq C_2 < \infty$ such that
\begin{align*}
    C_1 \leq \liminf_{n \rightarrow \infty} (n \, \mathrm{var}\, s_n) \leq \limsup_{n \rightarrow \infty} (n \, \mathrm{var}\, s_n) \leq C_2.
\end{align*}
  
\end{theorem}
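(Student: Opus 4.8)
The plan is to reduce to the centred case and then compute $\mathbb{E} s_n$ and $\mathrm{var}\, s_n$ directly, using the independence of $\Phi$ and $\Psi$ to split every fourth-order mixed moment into a product of two ordinary covariances (so that the finite-second-moment hypothesis really is all that is needed). Since $s_n$ is unchanged when $\Phi$ and $\Psi$ are replaced by their centred versions $\Phi - \mathbb{E}\Phi$ and $\Psi - \mathbb{E}\Psi$, a substitution that preserves stationarity, independence and the autocovariance functions, I would assume from the outset that $\Phi$ and $\Psi$ have mean zero. Abbreviating $\Phi_i = \Phi(x_i)$ and $\Psi_i = \Psi(x_i)$, the elementary identity $\sum_i (\Phi_i - \bar\Phi_n)(\Psi_i - \bar\Psi_n) = \sum_i \Phi_i\Psi_i - \frac1n \sum_{i,j}\Phi_i\Psi_j$ gives $s_n = (n-1)^{-1}(A_n - B_n)$ with $A_n = \sum_{i=1}^n \Phi_i\Psi_i$ and $B_n = \frac1n\sum_{i,j=1}^n \Phi_i\Psi_j$. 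By independence $\mathbb{E}[\Phi_i\Psi_j] = \mathbb{E}\Phi_i\cdot\mathbb{E}\Psi_j = 0$ for all $i,j$, hence $\mathbb{E} A_n = \mathbb{E} B_n = 0$ and $\mathbb{E} s_n = 0$, which is the first claim.

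For the second moment I would use $\mathrm{var}\, s_n = \mathbb{E} s_n^2 = (n-1)^{-2}\big(\mathbb{E} A_n^2 - 2\,\mathbb{E}(A_nB_n) + \mathbb{E} B_n^2\big)$ and evaluate the three expectations separately. In each of them, independence of $\Phi$ and $\Psi$ turns a fourth-order mixed moment into a product of two second moments, e.g.\ $\mathbb{E}[\Phi_i\Psi_i\Phi_j\Psi_j] = \mathbb{E}[\Phi_i\Phi_j]\,\mathbb{E}[\Psi_i\Psi_j] = C_\Phi(x_i-x_j)\,C_\Psi(x_i-x_j)$, each factor finite by assumption; in particular no moment of order higher than two of either field alone ever enters. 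This yields the intended leading term $\mathbb{E} A_n^2 = \sum_{i,j} C_\Phi(x_i-x_j)C_\Psi(x_i-x_j)$, together with $\mathbb{E}(A_nB_n) = \frac1n \sum_{k}\big(\sum_i C_\Phi(x_k-x_i)\big)\big(\sum_j C_\Psi(x_k-x_j)\big)$ and $\mathbb{E} B_n^2 = \frac{1}{n^2}\big(\sum_{i,k}C_\Phi(x_i-x_k)\big)\big(\sum_{j,l}C_\Psi(x_j-x_l)\big)$.

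The single estimate doing all the work is that the finite-range assumption combined with the bounded-neighbourhood assumption forces, for every $k$, $\sum_j C_\Phi(x_k - x_j) \le (K+1)\,C_\Phi(0)$ --- at most $K+1$ summands are nonzero and each is at most $C_\Phi(0)$ by positive-definiteness of $C_\Phi$ --- and likewise $\sum_j C_\Psi(x_k - x_j) \le (K+1)\,C_\Psi(0)$ and $\sum_{i,k}C_\Phi(x_i-x_k) \le (K+1)\,n\,C_\Phi(0)$. Feeding these bounds in gives $\mathbb{E}(A_nB_n) = O(1)$ and $\mathbb{E} B_n^2 = O(1)$, while the same reasoning bounds $\mathbb{E} A_n^2$ from above by $(K+1)\,n\,C_\Phi(0)C_\Psi(0)$ and, keeping only the diagonal terms and using $C_\Phi, C_\Psi \ge 0$ to discard the rest, from below by $n\,C_\Phi(0)C_\Psi(0)$. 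Hence $\mathrm{var}\, s_n = (n-1)^{-2}\sum_{i,j}C_\Phi(x_i-x_j)C_\Psi(x_i-x_j) + O(1/n^2)$, which is the stated expansion because $O(1/n^2) = o(1/n)$; multiplying by $n$ and using $n^2/(n-1)^2 \to 1$ together with the two-sided bound on $\mathbb{E} A_n^2$ gives $\liminf_n n\,\mathrm{var}\,s_n \ge C_\Phi(0)C_\Psi(0)$ and $\limsup_n n\,\mathrm{var}\,s_n \le (K+1)\,C_\Phi(0)C_\Psi(0)$, so one may take $C_1 = C_\Phi(0)C_\Psi(0)$ and $C_2 = (K+1)\,C_\Phi(0)C_\Psi(0)$ (with $C_1 > 0$, which implicitly requires the fields to be non-degenerate, since otherwise $s_n \equiv 0$).

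I do not expect a genuine obstacle: the whole argument is careful bookkeeping of second moments. The two places needing attention are (i) invoking independence of $\Phi$ and $\Psi$ at exactly the right moment, so that the finite-second-moment hypothesis suffices and no fourth moment of a single field is ever required, and (ii) controlling each of the sums over observation points by $O(n)$ or $O(1)$, which is exactly what the finite-range-plus-bounded-local-count hypotheses are tailored to deliver.
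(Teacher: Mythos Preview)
Your proposal is correct and follows essentially the same approach as the paper: both proofs factor the mixed fourth moments via independence, identify the leading term $\sum_{i,j}C_\Phi(x_i-x_j)C_\Psi(x_i-x_j)$, and control everything else using the single estimate $\sum_k C_\Phi(x_i-x_k)\le (K+1)C_\Phi(0)$ together with non-negativity of $C_\Phi,C_\Psi$, arriving at the same constants $C_1=C_\Phi(0)C_\Psi(0)$ and $C_2=(K+1)C_\Phi(0)C_\Psi(0)$. Your $A_n-B_n$ decomposition is a tidier bookkeeping than the paper's (which expands $\mathbb{E}(\Phi_i-\bar\Phi_n)(\Phi_j-\bar\Phi_n)$ into four terms and multiplies by the $\Psi$ analogue), and it makes explicit that the remainder is in fact $O(1/n^2)$, not merely $o(1/n)$.
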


\begin{proof}
From the {assumptions}
%{conditions }
{of stationarity and independence} of $\Phi, \Psi$ it follows that 
\begin{align*}
  \mathbb{E}s_n = \frac{1}{n-1} \sum_{i=1}^n \mathbb{E}\left( \Phi(x_i) - \bar{\Phi}_n \right) \mathbb{E}\left( \Psi(x_i) - \bar{\Psi}_n \right) = 0.
\end{align*}
{We then have} that
\begin{align*}
    \mathrm{var}\, s_n = \mathbb{E} s_n^2 = \frac{1}{(n-1)^2} \sum_{i=1}^n \sum_{j=1}^n \mathbb{E}\left( \Phi(x_i) - \bar{\Phi}_n \right)\left( \Phi(x_j) - \bar{\Phi}_n \right) \mathbb{E}\left( \Psi(x_i) - \bar{\Psi}_n \right)\left( \Psi(x_j) - \bar{\Psi}_n \right).
\end{align*}
Direct calculation shows that for $i, j \in \mathbb{N}$ we have
\begin{align*}
     & \mathbb{E}\left( \Phi(x_i) - \bar{\Phi}_n \right)\left( \Phi(x_j) - \bar{\Phi}_n \right) = \\ & = C_\Phi(x_i - x_j) - \frac{1}{n}\sum_{k=1}^n C_\Phi(x_j - x_k)  - \frac{1}{n}\sum_{k=1}^n C_\Phi(x_i - x_k) + \frac{1}{n^2}\sum_{k=1}^n\sum_{l=1}^n C_\Phi(x_k - x_l)
\end{align*}
and similarly for $\Psi$. Taking the product, double sum and dividing by $(n-1)^2$ we get the leading term
\begin{align*}
    L = \frac{1}{(n-1)^2} \sum_{i=1}^n \sum_{j=1}^n C_\Phi(x_i-x_j) C_\Psi(x_i-x_j)
\end{align*}
while all the other terms are $o(1/n)$. This follows from the non-negativity of $C_\Phi$ and $C_\Psi$ and upper bounds such as $\sum_{k=1}^n C_\Phi(x_i-x_k) \leq (K+1) C_\Phi(0)$. Considering the leading term $L$ we obtain similarly the following upper and lower bounds:
\begin{align*}
    & L \leq \frac{n}{(n-1)^2} (K+1) C_\Phi(0) C_\Psi(0), \\
    & L \geq \frac{n}{(n-1)^2} C_\Phi(0) C_\Psi(0).
\end{align*}
Setting now $C_1 = C_\Phi(0) C_\Psi(0)$ and $C_2 = (K+1) C_\Phi(0) C_\Psi(0)$ completes the proof.
\end{proof}

\begin{remark}
Without more specific assumptions on the positions of the observation points $X=\{ x_i, i \in \mathbb{N} \}$, such as $X=\mathbb{Z}^d$, it is not possible to establish a limit for $n \, \mathrm{var}\, s_n$. On the other hand, the previous theorem ensures that the asymptotic order of the variance is $1/n$. {Furthermore, it is not possible to drop the assumption of bounded support of $C_\Phi, C_\Psi$ without additional assumptions on the properties of $\Phi,\Psi$ such as $\alpha$-mixing.}
\end{remark}

\subsection{Point process case}

Assume now that $\Phi$ and $\Psi$ are stationary point processes observed on $W$. For testing the independence structure of the two point processes $\Phi$ and $\Psi$, a usual choice of the test statistic is the sample cross $K$-function \citep{IllianEtal2008} computed for a number of different ranges.
%The details can be found e.g. in \cite{IllianEtal2008}. 
Roughly speaking, the cross $K$-function carries information about the mean number of points of $\Psi$ up to distance $r$ from an arbitrary point of $\Phi$. The globally corrected ``Ohser-type''  estimator of this function is of the form \citep[p.230]{IllianEtal2008}
\begin{align*}
    \widehat{K}_{12}(r) = \frac{c(r)}{\widehat{\lambda}_\Phi \widehat{\lambda}_\Psi} \sum_{x \in \Phi \cap W} \sum_{y \in \Psi \cap W} \mathbb{I}(\| x-y \| \leq r), \; r > 0,
\end{align*}
where $\widehat{\lambda}_\Phi$ and $\widehat{\lambda}_\Psi$ are {estimated intensities of $\Phi$ and $\Psi$, i.e.}
% intensity estimates of $\Phi$ and $\Psi$ in the homogeneous fashion, i.e, 
\begin{align*}
    \widehat{\lambda}_\Phi = \sum_{x \in \Phi \cap W} \frac{1}{|W|},
\end{align*}
and $c(r)$ is an edge correction factor given by 
\begin{align*}
c(r)=\frac{\pi r^2}{\Gamma_W(r)},
\end{align*}
where 
\begin{align*}
  \Gamma_W(r) = \int_W\int_W \mathbb{I}(\| x-y \| \leq r)dxdy = 2\pi\int_0^r t \bar\gamma_W(t) dt,
\end{align*}
with isotropized set covariance function $\bar\gamma_W(t)$ of $W$.

The variance of $\widehat{K}_{12}(r)$ was studied for Poisson processes in \cite{Rajala2018} under the assumption of a fixed number of observed points of $\Phi$ and $\Psi$ in $W$. The paper presents also a formula for general {non-Poisson processes} but without a proof.
%and in a form depicting the poissonity, therefore 
We {show} in Theorem \ref{T2} the formula for general {processes}
%case 
in the case of random number of points in $W$. The method using this variance for standardization of the sample cross $K$-function is denoted here by RS$_{K,var}$. The approach with kernel estimate of the variance of the sample cross $K$-function is denoted here by RS$_{K,ker}$. The random shift with torus correction applied to the sample cross $K$-function is denoted here by RS$_{K,torus}$ and with minus correction by RS$_{K,minus}$. 

We remark that the sample cross $K$-function is estimated {for} a given number of {arguments}
%points 
(50 in our study), which results in the same number of simultaneous Monte Carlo tests. The multiple correction was resolved in our study by the global envelope test \citep{MyllymakiEtal2017, MrkvickaEtal2017}.
Due to this multiple testing correction and the fact that the cross $K$-function summarizes the information from some neighborhood of the observed points (which was not the case for the sample covariance of two random fields) the variance correction tests are conservative and less powerful than the torus correction approach, as shown in the simulation study below.

Therefore, we were looking for a test statistic which would be scalar (avoiding the multiple correction problem which {amplifies} the deviation of the random shift strategy from exchangeability) and which would be less affected by summarizing the information from a neighborhood of the observed points (which {amplifies} the effect of cracks in autocorrelation structure). 

%Thus we also study the expectation of the cross nearest neighbor distance distribution function $\E G_{12}$. The details can be found in \cite{IllianEtal2008}. Roughly speaking the cross nearest neighbor distance distribution function express the distribution function of the nearest distance of point $\Psi$ to the arbitrary point of $\Phi$. Due to the avoiding of the edge effects, we have used the Kaplan-Meier estimator of $G_{12}$ \citep{Baddeley1997}. The random shifts method applied on $\E G_{12}$ will be denoted by RS$_{G,torus}$, RS$_{G,minus}$ and RS$_{G,ker}$.
Thus we also study the {expectation of the}
%expected 
cross {nearest-neighbor distance $D_{12}$, which is the (random) distance from an arbitrary point of} $\Phi$ to the nearest point of $\Psi$. To estimate the expectation $\mathbb{E} D_{12}$ we use the Lebesgue-Stieltjes integral 
\begin{align*}\int r \, \widehat{G}_{12}(\mathrm{d} r)
\end{align*}
where $\widehat{G}_{12}$ is the Kaplan-Meier estimator of the cross 
nearest-neighbor distance distribution function $G_{12}$ \citep{Baddeley1997,IllianEtal2008} which is the distribution function of the random variable $D_{12}$. The random shifts methods applied {to} $\mathbb{E} D_{12}$ will be denoted by RS$_{G,torus}$, RS$_{G,minus}$ and RS$_{G,ker}$.

The following theorem specifies the first- and second-order moment properties of the quantities used to estimate the variance of $\widehat{K}_{12}(r)$ which is to be used in  RS$_{K,var}$.
%the variance of 
%\begin{align*}
%    \widehat{T}(r) = \sum_{x \in \Phi \cap W} \sum_{y \in \Psi \cap W} \mathbb{I}(\| x-y \| \leq r), \; r > 0.
%\end{align*}
%which is used in RS$_{K,var}$  approach as approximation of the variance of $\widehat{K}_{12}(r)$.
The theorem can be proved by using the appropriate versions of the Campbell theorem which are given in~\ref{appendix:Campbell}.

\begin{theorem}\label{T2}
Let $\Phi$ and $\Psi$ be two independent stationary point processes on $\mathbb{R}^d$ with intensities $\lambda_1$ and $\lambda_2$ and pair-correlation functions $g_1$ and $g_2$, respectively. Let $W$ be the observation window where both processes are observed. For a given $r>0$ {let $f_r:\mathbb{R}^d \rightarrow [0,\infty)$ be a Borel function and} denote 
\begin{align*}
R = \sum_{x \in \Phi \cap W} \sum_{y \in \Psi \cap W} f_r(x-y).
\end{align*}
Let 
\begin{align*}
\widehat{\lambda}_1 = \sum_{x \in \Phi \cap W} \frac{1}{|W|}, \qquad \widehat{\lambda}_2 = \sum_{y \in \Psi \cap W} \frac{1}{|W|}
\end{align*} 
be the estimated intensities and $S=\widehat{\lambda}_1 \widehat{\lambda}_2$. Then if $f_r(x-y) = \mathbb{I}(\| x-y \| \leq r)$,
\begin{align*}
    \frac{R}{S} = \frac{1}{\widehat{\lambda}_1 \widehat{\lambda}_2} \sum_{x \in \Phi \cap W} \sum_{y \in \Psi \cap W} f_r(x-y)
\end{align*}
is, up to a multiplicative constant, equal to $\widehat{K}_{12}(r)$. It holds that
\begin{align*}
    \mathbb{E} R = \mu_R = & \lambda_1 \lambda_2 \int_{W^2} f_r(u-v) \d u \d v, \\
    \mathbb{E} S = \mu_S = & \lambda_1 \lambda_2,
\end{align*}
\begin{align*}
    \mathrm{var} R = \sigma^2_R = & \lambda_1^2 \lambda_2^2 \int_{W^4}[g_1(u-u') g_2(v-v') - 1] f_r(u-v) f_r(u'-v') \d u \d v \d u' \d v' \\
    & + \lambda_1^2 \lambda_2 \int_{W^3} g_1(u-u') f_r(u-v) f_r(u'-v) \d u \d v \d u' \\
    & + \lambda_1 \lambda_2^2 \int_{W^3} g_2(v-v') f_r(u-v) f_r(u-v') \d u \d v \d v' \\
    & + \lambda_1 \lambda_2 \int_{W^2} f_r(u-v) \d u \d v, \\
    \mathrm{var} S = \sigma^2_S = & \frac{1}{|W|^4} \left( \lambda_1^2 \int_{W^2} g_1(u-v) \d u \d v + \lambda_1 |W| \right)\left( \lambda_2^2 \int_{W^2} g_2(u-v) \d u \d v + \lambda_2 |W| \right) - \lambda_1^2 \lambda_2^2,
\end{align*}
\begin{align*}
    \mathrm{cov}(R,S) = & \frac{\lambda_1^2 \lambda_2^2}{|W|^2} \int_{W^4}[g_1(u-u') g_2(v-v') - 1] f_r(u-v) \d u \d v \d u' \d v' \\
    & + \frac{\lambda_1^2 \lambda_2}{|W|^2} \int_{W^3} g_1(u-u') f_r(u-v) \d u \d v \d u' \\
    & + \frac{\lambda_1 \lambda_2^2}{|W|^2} \int_{W^3} g_2(v-v') f_r(u-v) \d u \d v \d v' \\
    & + \frac{\lambda_1 \lambda_2}{|W|^2} \int_{W^2} f_r(u-v) \d u \d v. \\
\end{align*}
\end{theorem}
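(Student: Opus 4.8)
The plan is to express every moment listed in the statement through the first- and second-order Campbell theorems for stationary point processes (as recalled in the appendix), using the independence $\Phi \perp \Psi$ to factorize the expectations that mix the two processes. All sums range over points in the bounded window $W$ and all summands are non-negative, so Tonelli's theorem justifies interchanging expectation and summation, while local integrability of $g_1, g_2$ on $W$ guarantees finiteness of the second moments that appear.

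\emph{First moments.} Writing $R = \sum_{x \in \Phi} \sum_{y \in \Psi} \mathbb{I}(x \in W)\,\mathbb{I}(y \in W)\, f_r(x-y)$ and applying the Campbell theorem first to $\Phi$ and then to $\Psi$ (both with intensity measure $\lambda_i\,\mathrm{d}u$ by stationarity), independence immediately yields $\mathbb{E}R = \lambda_1 \lambda_2 \int_{W^2} f_r(u-v)\,\mathrm{d}u\,\mathrm{d}v$. For $S$ one combines the unbiasedness $\mathbb{E}\widehat{\lambda}_1 = \lambda_1$, $\mathbb{E}\widehat{\lambda}_2 = \lambda_2$ of the natural intensity estimators with independence to obtain $\mathbb{E}S = \lambda_1 \lambda_2$.

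\emph{Second moments.} For $\mathbb{E}R^2 = \mathbb{E}\sum_{x,x' \in \Phi}\sum_{y,y' \in \Psi} f_r(x-y) f_r(x'-y')$ I split each of the two double sums into its diagonal ($x=x'$, resp.\ $y=y'$) and off-diagonal part, producing four groups of terms. On a $\Phi$-diagonal one invokes the Campbell theorem with density $\lambda_1$, on a $\Phi$-off-diagonal the second-order Campbell theorem with second factorial moment density $\lambda_1^2 g_1(u-u')$, and analogously for $\Psi$; independence then factors the four contributions, giving the products $\lambda_1\lambda_2 \int_{W^2} f_r^2$, $\lambda_1^2\lambda_2 \int_{W^3} g_1 f_r f_r$, $\lambda_1\lambda_2^2 \int_{W^3} g_2 f_r f_r$ and $\lambda_1^2\lambda_2^2 \int_{W^4} g_1 g_2 f_r f_r$, with the arguments exactly as in the statement. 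Subtracting $(\mathbb{E}R)^2 = \lambda_1^2\lambda_2^2 \int_{W^4} f_r(u-v) f_r(u'-v')\,\mathrm{d}u\,\mathrm{d}v\,\mathrm{d}u'\,\mathrm{d}v'$ merges the constant $-1$ into the fourfold integral, and the choice $f_r = \mathbb{I}(\|\cdot\| \le r)$ (so that $f_r^2 = f_r$) reduces the diagonal term to the stated last summand; this is $\sigma_R^2$. For $\sigma_S^2$, independence gives $\mathrm{var}\,S = \mathbb{E}\widehat{\lambda}_1^2\,\mathbb{E}\widehat{\lambda}_2^2 - \lambda_1^2\lambda_2^2$, while, writing $N_i(W)$ for the number of points of the $i$-th process in $W$, the second-order Campbell theorem gives $\mathbb{E}\widehat{\lambda}_i^2 = |W|^{-2}\big(\mathbb{E}[N_i(W)(N_i(W)-1)] + \mathbb{E}N_i(W)\big) = |W|^{-2}\big(\lambda_i^2\int_{W^2} g_i + \lambda_i|W|\big)$, which is the asserted product form. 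Finally $RS = |W|^{-2}\sum_{x,x' \in \Phi}\sum_{y,y' \in \Psi} f_r(x-y)$, and the same fourfold diagonal/off-diagonal decomposition --- with the summand now carrying only $f_r(x-y)$ --- gives $\mathbb{E}[RS]$ equal to $|W|^{-2}$ times $\lambda_1\lambda_2\int_{W^2} f_r + \lambda_1^2\lambda_2\int_{W^3} g_1 f_r + \lambda_1\lambda_2^2\int_{W^3} g_2 f_r + \lambda_1^2\lambda_2^2\int_{W^4} g_1 g_2 f_r$; subtracting $\mathbb{E}R\,\mathbb{E}S = |W|^{-2}\lambda_1^2\lambda_2^2\int_{W^4} f_r$ once more absorbs the $-1$ and delivers the claimed $\mathrm{cov}(R,S)$.

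\emph{Main difficulty.} Once the Campbell machinery is in place the computation is essentially bookkeeping; the delicate points are keeping the diagonal/off-diagonal split of the double--double sums straight, renaming integration variables so that the output matches the precise integrals printed in the theorem, and verifying that the regularity hypotheses under which the appendix's Campbell theorems are stated --- simplicity of $\Phi, \Psi$ and local integrability of $g_1, g_2$ --- hold on the bounded window $W$ so that all moment measures involved are finite.
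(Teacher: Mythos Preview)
Your proposal is correct and follows essentially the same approach as the paper: the paper states that the theorem follows from the Campbell-type formulas in the appendix (one for $\mathbb{E}\sum_{x\in\Phi}\sum_{y\in\Psi}h_1$ and one for $\mathbb{E}\sum_{x,x'\in\Phi}\sum_{y,y'\in\Psi}h_2$, each proved by factorizing via independence and invoking the single-process first- and second-order moment measures), and your diagonal/off-diagonal decomposition combined with independence is exactly the content of those appendix results applied with the appropriate choices of $h_1,h_2$. Your observation that $f_r^2=f_r$ is needed to obtain the final term of $\sigma_R^2$ as stated is a useful clarification.
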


\begin{remark}
Using the notation of Theorem~\ref{T2}, the approach of \citet[p.351]{StuartOrd1994}, based on the Taylor expansion of the function $f(R,S) = R/S$, provides an approximation of the variance of the ratio $R/S$:
\begin{align}\label{varratio}
    \text{var} \left( \frac{R}{S} \right) \approx \left(\frac{\mu_R}{\mu_S}\right)^2 \left[ \frac{\sigma^2_R}{\mu_R^2} - 2 \frac{\mathrm{cov}(R,S)}{\mu_R \mu_S}  + \frac{\sigma^2_S}{\mu_S^2} \right].
\end{align}
All the required quantities are given in Theorem~\ref{T2} and hence the method RS$_{K,var}$ can be practically used with this approximated variance and plugged-in estimates of the pair-correlation functions $g_1, g_2$ and intensities $\lambda_1, \lambda_2$.
\end{remark}

\section{Simulation experiments} %Jiri Jonatan

\subsection{Random field case}\label{sec:simulationsRF}

For assessing the performance of the different tests of independence of a pair of random fields considered in this paper, we have performed a simulation study taking advantage of the \texttt{R} packages \texttt{spatstat, geoR} and \texttt{RandomFields} and the code accompanying the paper \citet{Viladomat2014}.

The simulation study is designed as follows. The observation window is the unit square $W=[0,1]^2$. The set of sampling points is given by the binomial point process $X$ with 100 points distributed uniformly in $W$. The random fields $\Phi$ and $\Psi$ are stationary centered Gaussian random fields with the isotropic exponential correlation function $c(r) = \exp\{-r/s\}, r \geq 0,$ where $s>0$ is the scale parameter. We consider different values of $s$ from 0.001 (nearly independent observations) to 0.5 (very smooth realizations). 
The variance of $\Phi$ is 1 in all simulation experiments, the variance of $\Psi$ varies in different settings.

Each test based on random shifts uses $N=999$ independent shifts, the \citet{Viladomat2014} approach uses 1000 bootstrap replicates. The RS$_{minus}$ method uses $W_c = [1/3,2/3]^2$ {and random shift vectors with uniform distribution on $[-1/3,1/3]^2$. Other versions of the random shift method use shift vectors with uniform distribution on a disk centered in the origin and having radius 1/2}. The RS$_{ker}$ methods use the Epanechnikov kernel with bandwidths $b_1=0.05$, $b_2=0.1$ or $b_3=0.15$. The RS$_{var}$ method fits the exponential variogram model by a least-squares approach.
%This value was chosen \emph{ad-hoc} so that a reasonable number of terms in \eqref{eq:NW} have positive weights. With $h=0.1$ we observed that on average 37 terms out of 1000 contributed with positive weights. 
Each experiment is repeated 1000 times 
%(with independent realizations of $\Phi, \Psi$ and $X$) 
and the empirical rejection rate is recorded.

We first study the possible liberality of the tests under the null hypothesis of independence. {Both $\Phi$ and $\Psi$ have unit variance here.}
%In this setting, the random fields $\Phi$ and $\Psi$ are independent and both have unit variance. 
{We consider a nominal level equal to 0.05 and compare it to the empirical rejection rates, see Table~\ref{tab:RFs_null}.}
\begin{table}[tp]
\footnotesize
%    \begin{tabular}{| l || c | c| c | c |} \hline
%    Method      & RS$_{torus}$  & RS$_{minus}$  & Viladomat     & RS$_{ker}(b=0.05)$ \\ \hline \hline
%    $s$ = 0.001 & 0.043         & 0.043         & 0.047         & 0.042              \\ \hline
%    $s$ = 0.1   & 0.050         & 0.049         & {\bf 0.036}   & 0.050              \\ \hline
%    $s$ = 0.2   & {\bf 0.081}   & 0.041         & 0.043         & 0.039              \\ \hline
%    $s$ = 0.3   & {\bf 0.075}   & 0.039         & {\bf 0.032}   & 0.056              \\ \hline
%    $s$ = 0.4   & {\bf 0.105}   & 0.048         & {\bf 0.029}   & 0.051              \\ \hline
%    $s$ = 0.5   & {\bf 0.109}   & 0.037         & 0.037         & 0.047              \\ \hline
%  \end{tabular}
%
%  \vspace{0.25cm}
%
%  \begin{tabular}{| l || c | c | c | c |} \hline
%    Method      & RS$_{ker}(b=0.10)$   & RS$_{ker}(b=0.15)$     & RS$_{count}$  & RS$_{var}$    \\ \hline \hline 
%    $s$ = 0.001 & {\bf 0.036}          & {\bf 0.032}            & 0.044         & 0.044         \\ \hline
%    $s$ = 0.1   & 0.043                & 0.045                  & 0.049         & 0.038         \\ \hline
%    $s$ = 0.2   & 0.052                & 0.053                  & 0.059         & {\bf 0.031}   \\ \hline
%    $s$ = 0.3   & 0.043                & {\bf 0.036}            & 0.057         & {\bf 0.022}   \\ \hline
%    $s$ = 0.4   & 0.045                & {\bf 0.035}            & 0.058         & {\bf 0.031}   \\ \hline
%    $s$ = 0.5   & 0.047                & 0.043                  & {\bf 0.069}   & {\bf 0.022}   \\ \hline
%  \end{tabular}
  \begin{tabular}{| L{1.5cm} ||  C{1.5cm} | C{1.5cm} | C{1.5cm} | C{1.5cm} | C{1.5cm} |} \hline
    Method & RS$_{torus}$ & RS$_{minus}$ & Viladomat & RS$_{count}$ & RS$_{var}$ \\ \hline \hline 
    $s$ = 0.001 & 0.043         & 0.043         & 0.047         & 0.044         & 0.044             \\ \hline
    $s$ = 0.1   & 0.050         & 0.049         & {\bf 0.036}   & 0.049         & 0.038             \\ \hline
    $s$ = 0.2   & {\bf 0.081}   & 0.041         & 0.043         & 0.059         & {\bf 0.031}       \\ \hline
    $s$ = 0.3   & {\bf 0.075}   & 0.039         & {\bf 0.032}   & 0.057         & {\bf 0.022}       \\ \hline
    $s$ = 0.4   & {\bf 0.105}   & 0.048         & {\bf 0.029}   & 0.058         & {\bf 0.031}       \\ \hline
    $s$ = 0.5   & {\bf 0.109}   & 0.037         & 0.037         & {\bf 0.069}   & {\bf 0.022}       \\ \hline
  \end{tabular}
  
  \vspace{0.25cm}
  
  \begin{tabular}{| L{1.5cm} || C{1.5cm} | C{1.5cm} | C{1.5cm} |} \hline
    Method & RS$_{ker}$ & RS$_{ker}$ & RS$_{ker}$ \\ \hline 
    Bandwidth & $b=0.05$ & $b=0.10$ & $b=0.15$ \\ \hline \hline 
    $s$ = 0.001 & 0.042     & {\bf 0.036}          & {\bf 0.032}     \\ \hline
    $s$ = 0.1   & 0.050     & 0.043                & 0.045           \\ \hline
    $s$ = 0.2   & 0.039     & 0.052                & 0.053           \\ \hline
    $s$ = 0.3   & 0.056     & 0.043                & {\bf 0.036}     \\ \hline
    $s$ = 0.4   & 0.051     & 0.045                & {\bf 0.035}     \\ \hline
    $s$ = 0.5   & 0.047     & 0.047                & 0.043           \\ \hline
  \end{tabular}

  \caption{Empirical rejection rates, under the null hypothesis of independence of a pair of random fields, in the simulation study assessing the {achieved}
  %current 
  significance level of the tests. Columns determine different versions of the test. Rows determine the smoothness of the random fields, from almost independent observations (scale $s= 0.001$) to very smooth random fields ($s = 0.5$). Kernel estimation of variance was performed with different choices of bandwidth $b$. The confidence interval for the rejection rate based on the binomial distribution for $\alpha=0.05$ is [0.037,0.064]. Numbers falling outside this interval are printed in boldface.}
  \label{tab:RFs_null}
\end{table}

We then investigate also the power of the tests considered here under different alternatives. Let $Z_1, Z_2$ be independent, centered, unit-variance Gaussian random fields with the isotropic correlation function $c(r)$ as above. Then $\Phi=Z_1$ and $\Psi=Z_1+\sigma Z_2$ for some $\sigma > 0$. We consider the values of $\sigma$ to be 2, 4 or 6. The tests of independence were performed on the nominal level of 0.05. The empirical rejection rates are given in Table~\ref{tab:RFs_power}.

\begin{table}[tp]
 {\footnotesize %\centering
    \begin{tabular}{| l | l ||  C{1.5cm} | C{1.5cm} | C{1.5cm} | C{1.5cm} | C{1.5cm} |} \hline
    \multicolumn{2}{| l ||}{Method} & RS$_{torus}$ & RS$_{minus}$ & Viladomat & RS$_{count}$ & RS$_{var}$ \\ \hline \hline 
    $\sigma$ = 2 & $s$ = 0.001 & 0.994 & 0.263 & 0.998 & 0.993 & 0.994 \\ \hline
     & $s$ = 0.2 & 0.738 & 0.161 & 0.617 & 0.742 & 0.646 \\ \hline
     & $s$ = 0.5 & 0.621 & 0.122 & 0.457 & 0.607 & 0.428  \\ \hline \hline
     $\sigma$ = 4 & $s$ = 0.001 & 0.722 & 0.114 & 0.740 & 0.711 & 0.721 \\ \hline
     & $s$ = 0.2 & 0.307 & 0.081 & 0.204 & 0.295 & 0.230 \\ \hline
     & $s$ = 0.5 & 0.286 & 0.070 & 0.143 & 0.236 & 0.124 \\ \hline \hline
     $\sigma$ = 6 & $s$ = 0.001 & 0.378 & 0.072 & 0.394 & 0.385 & 0.388 \\ \hline
     & $s$ = 0.2 & 0.186 & 0.058 & 0.100 & 0.161 & 0.121 \\ \hline
     & $s$ = 0.5 & 0.191 & 0.054 & 0.081 & 0.140 & 0.081 \\ \hline
  \end{tabular}
  
  \vspace{0.25cm}
  
  \begin{tabular}{| l | l || C{1.5cm} | C{1.5cm} | C{1.5cm} |} \hline
    \multicolumn{2}{| l ||}{Method} & RS$_{ker}$ & RS$_{ker}$ & RS$_{ker}$ \\ \hline 
    \multicolumn{2}{| l ||}{Bandwidth} & $b=0.05$ & $b=0.10$ & $b=0.15$ \\ \hline \hline 
    $\sigma$ = 2 & $s$ = 0.001 & 0.888 & 0.991 & 0.990 \\ \hline
     & $s$ = 0.2 & 0.094 & 0.346 & 0.560 \\ \hline
     & $s$ = 0.5 & 0.086 & 0.277 & 0.458 \\ \hline \hline
     $\sigma$ = 4 & $s$ = 0.001 & 0.533 & 0.662 & 0.658 \\ \hline
     & $s$ = 0.2 & 0.090 & 0.158 & 0.211 \\ \hline
     & $s$ = 0.5 & 0.074 & 0.146 & 0.182 \\ \hline \hline
     $\sigma$ = 6 & $s$ = 0.001 & 0.295 & 0.338 & 0.337 \\ \hline
     & $s$ = 0.2 & 0.070 & 0.100 & 0.122 \\ \hline
     & $s$ = 0.5 & 0.070 & 0.090 & 0.105 \\ \hline
  \end{tabular}}
  \caption{Empirical rejection rates, under different alternatives, in the tests of independence of a pair of random fields. All tests were performed on the 5\% significance level.}
  \label{tab:RFs_power}
\end{table}

The {results} indicate that the RS$_{torus}$ approach is in fact valid for very rough random fields with nearly independent observations but it gets more and more liberal with the increasing scale parameter. For very smooth random fields the actual significance level can be even higher than twice the nominal level. In the power study this approach achieved the highest empirical rejection rates but this is only a consequence of the fact that the test based on the torus correction rejects inadequately often even under the null hypothesis.

For the RS$_{minus}$ approach, the empirical rejection rates match the nominal significance levels or even indicate slight conservativeness of the test. On the other hand, the power of the test is, in most cases, the smallest among the tests considered here. This is caused by discarding a significant part of the observed data.

The \citet{Viladomat2014} approach appears to be conservative for smooth random fields but matches the nominal significance levels well for the random fields with nearly independent observations. For such random fields, the power of this test is the highest among the tests considered here. 
%, with the exception of the test with torus correction. 
On the other hand, for smooth random fields this test is outperformed by the random shift test with variance correction.

Concerning the random shift tests with variance correction proposed in this paper, the RS$_{count}$ achieved small liberality but only for very smooth random fields and simultaneously it has high power.
% the biggest power from all alternatives. 
The RS$_{var}$ and RS$_{ker}$ tests appear to be slightly conservative. The trade-off between the bias and variance in kernel estimation method, represented here by the choice of bandwidth, transforms in the RS$_{ker}$ method into balance between the conservativeness and power. The larger the bandwidth, the higher the power and the bigger the conservativeness.

Finally, we remark that all the tests based on random shifts are very fast to compute (with $N=999$ shifts approx. 1 second per realization on a regular desktop computer) while the \citet{Viladomat2014} test is about 20 times more demanding.

\subsection{Point process case}

For assessing the performance of different tests of independence of a pair of point processes considered in this paper we have also performed an additional simulation study, taking advantage of the \texttt{R} packages \texttt{spatstat, RandomFields} and \texttt{GET}.

The simulation study is designed as follows. The observation window is the unit square $W=[0,1]^2$. The following independent models of $\Phi$ and $\Psi$ were used for verifying the significance level:
\begin{itemize}
    \item[S1:] log Gaussian Cox process (LGCP) with mean of the underlying Gaussian random field $\mu=4.5$, variance $\sigma^2 = 1$, and  correlation function being exponential $c(r)=\exp\{ -r/s \}$ with scale $s=0.5$.
    \item[S2:] LGCP with $\mu=4.5, \sigma^2 = 1, s=0.3$, and an exponential correlation function.
    \item[S3:] LGCP with $\mu=4.5, \sigma^2 = 1, s=0.1$, and an exponential correlation function.
    \item[S4:] Poisson process with intensity 150.
    \item[S5:] Strauss process with $\beta=200, \gamma=0.4, R=0.03$.
    \item[S6:] Strauss process with $\beta=350, \gamma=0.4, R=0.05$.
\end{itemize}

The following models were used for the power study:
\begin{itemize}
    \item[P1:] two LGCPs with $\mu=4.5, \sigma^2 = 1$, an exponential correlation function with $s=0.3$, where the realizations of $\Phi$ and $\Psi$ are based on the same realization of the underlying Gaussian random field.
    \item[P2:] same as P1, with $s=0.1$.
    \item[P3:] $\Phi$ is LGCP with $\mu=4.5, \sigma^2 = 1$, an exponential correlation function with $s=0.3$ and $\Psi$ consists of points of $\Phi$ independently jittered by shift vectors with uniform distribution on a disk with radius 0.1.
    \item[P4:] same as P3, with $s=0.1$.
    \item[P5:] $\Phi$ is a Poisson process with intensity 150, and $\Psi$ consists of points of $\Phi$ independently jittered by shift vectors with uniform distribution on a disk with radius 0.1.
    \item[P6:] Strauss process with $\beta=520, \gamma=0.4, R=0.03$ with points randomly independently labeled by ``1'' or ``2''. $\Phi$ is the process of points labeled by ``1'' and $\Psi$ is the process of points labeled by ``2''.
    \item[P7:] same as P6, but the Strauss process has $\beta=650, \gamma=0.7, R=0.05$.
    \item[P8:] cluster model with hard-core process of parent points (Strauss process with $\gamma=0$ and hard-core distance 0.05) and offspring points distributed uniformly on a disk with radius 0.04. Mean number of offspring per parent point is 5. {Parent points are randomly independently labeled ``1'' or ``2'' and the offsprings inherit their parent's label.}
    \item[P9:] same as P8, but with offspring distributed uniformly on a disk with radius 0.06.
\end{itemize}

All models were chosen so that the intensity is approximately 150 for both $\Phi$ and $\Psi$. Otherwise the design is the same as in the random field case: 1000 independent replications, 999 random shifts, {shift vectors with uniform distribution on a disk centered in the origin and having radius 1/2, except for the minus correction where shift vectors with uniform distribution on $[-1/3,1/3]^2$ are used instead.}

The rejection rates of the test with 5\% nominal level, based on the sample cross $K(r)$ function evaluated in 50 equidistantly spaced points in the interval $r \in [0,0.15]$, are reported in Table~\ref{tab:K12}.
%results of the simulation study with sample cross $K(r)$ function evaluated in 50 equidistantly spaced points in the interval $r \in [0,{0.15}]$ with 5\% nominal level are reported in Table~\ref{tab:K12}. 
The rejection rates of the test with 5\% nominal level, based on $\mathbb{E} D_{12}$, are reported in Table~\ref{tab:EG}.
%The results of the simulation study with sample mean of the cross nearest neighbor distribution function $\E G$ with 5 \% nominal level are reported in Table~\ref{tab:EG}. 

\begin{table}[tp]
	\centering
	\footnotesize
	 \begin{tabular}{| c | c | c | c | c | c | c |} \hline
	   Model   &$RS_{{K,torus}}$	&$RS_{{K,minus}}$	&$RS_{{K,ker}}(0.05)$	&$RS_{{K,ker}}(0.10)$   &$RS_{{K,ker}}(0.15)$ &RS$_{K,var}$ 	\\ \hline
			S1 & \bf 0.094 & \bf 0.031 & \bf 0.025 & \bf 0.014 & \bf 0.012 & \bf 0.022  \\ \hline
			S2 & \bf 0.068 & \bf 0.035 & \bf 0.034 & \bf 0.020 & \bf 0.015 &     0.039  \\ \hline
			S3 &     0.043 &     0.040 & \bf 0.029 & \bf 0.012 & \bf 0.004 &     0.057  \\ \hline
			S4 &     0.042 &     0.039 & \bf 0.036 & \bf 0.012 & \bf 0.007 &     0.047  \\ \hline
			S5 &     0.051 &     0.053 & \bf 0.035 & \bf 0.014 & \bf 0.004 &     $--$   \\ \hline
			S6 &     0.049 &     0.041 & \bf 0.032 & \bf 0.016 & \bf 0.006 &     $--$   \\ \hline \hline
			P1 &     0.841 &     0.074 &     0.014 &     0.006 &     0.019 &     0.411  \\ \hline
			P2 &     0.968 &     0.115 &     0.016 &     0.020 &     0.058 &     0.634  \\ \hline \hline
			P3 &     0.996 &     0.123 &     0.004 &     0.005 &     0.022 &     0.546  \\ \hline
			P4 &     0.998 &     0.158 &     0.012 &     0.002 &     0.021 &     0.635  \\ \hline
			P5 &     0.970 &     0.112 &     0.008 &     0.003 &     0.010 &     0.764  \\ \hline \hline
			P6 &     0.985 &     0.110 &     0.045 &     0.308 &     0.574 &     $--$   \\ \hline
			P7 &     0.768 &     0.084 &     0.022 &     0.067 &     0.151 &     $--$   \\ \hline \hline
			P8 &     0.457 &     0.033 &     0.023 &     0.039 &     0.055 &     $--$   \\ \hline
			P9 &     0.272 &     0.043 &     0.019 &     0.027 &     0.018 &     $--$   \\ \hline
		\end{tabular}
		\caption{Empirical rejection rates  of various tests of independence with sample cross $K(r)$ function. The confidence interval (based on binomial distribution) for nominal level $\alpha=0.05$ is [0.037,0.064]. Numbers falling outside these intervals in cases studying significance level are printed in boldface.}
		\label{tab:K12}
\end{table}

\begin{table}[tp]
	\centering
	\footnotesize
	 \begin{tabular}{| c | c | c | c | c | c |} \hline
							Model	&$RS_{{G,torus}}$	&$RS_{{G,minus}}$	&$RS_{\text{G,ker}}(0.05)$	&$RS_{{G,ker}}(0.10)$   &$RS_{{G,ker}}(0.15)$ 	\\ \hline
			S1 & 0.054 & \bf 0.035 & \bf 0.036 & 0.040  & \bf 0.026 \\ \hline
			S2 & 0.050 & 0.062  & 0.049 & \bf 0.028 & \bf 0.025 \\ \hline
			S3 & 0.048 & \bf 0.073 & 0.043 & \bf 0.028 & \bf 0.019 \\ \hline
			S4 & 0.057 & 0.044  & 0.043 & 0.037 & \bf 0.026 \\ \hline
			S5 & 0.050 & 0.057  & 0.045 & 0.042 & \bf 0.026 \\ \hline
			S6 & 0.055 & 0.044  & 0.046 & 0.035 & \bf 0.023 \\ \hline
			\hline
			P1 & 0.656 & 0.025 & 0.021 & 0.013 & 0.026 \\ \hline
			P2 & 0.918 & 0.022 & 0.027 & 0.044 & 0.138 \\ \hline \hline
			P3 & 0.911 & 0.014 & 0.009 & 0.003 & 0.007 \\ \hline
			P4 & 0.932 & 0.021 & 0.014 & 0.001 & 0.013 \\ \hline
			P5 & 0.506 & 0.039 & 0.014 & 0.008 & 0.028 \\ \hline \hline
            P6 & 0.906 & 0.134 & 0.167 & 0.601 & 0.743 \\ \hline
			P7 & 0.653 & 0.110 & 0.061 & 0.244 & 0.363 \\ \hline  \hline
			P8 & 0.222 & 0.070 & 0.058 & 0.061 & 0.084 \\ \hline
			P9 & 0.197 & 0.064 & 0.043 & 0.049 & 0.048 \\ \hline
		\end{tabular}
		\caption{Empirical rejection rates of various tests of independence {based on $\mathbb{E} D_{12}$ with 5\% nominal level.}}
		%with sample mean of the cross nearest neighbor distribution function $\E G$ with 5 \% nominal level.}
		\label{tab:EG}
\end{table}

The point process case is more complex than the random field case since the test statistic accumulates information from certain neighborhood of observed points and hence the variance correction methods do not perform well: the effect of dropping part of information after the shifts is more severe here.
%the large distance to the points, therefore the variance correction method does not perform well in this case. 
{They are} too conservative and have smaller power than RS$_{torus}$ method.
%In fact, when the information is accumulated from neighborhood and the kernel estimate of the variance is used then the more not excheangeability is produced.
The effect is more pronounced for cross $K$-function than for {$\mathbb{E} D_{12}$} which is caused by the multiple testing problem. Indeed, the results of RS$_{K,ker}$ computed for a single tested value $K_{12}(0.05)$ (not reported here) are comparable with results of RS$_{G,ker}$. The RS$_{K,minus}$ method is conservative, too, which is probably caused by a long reach of the cross $K$-function and small size of the eroded window.

The RS$_{K,torus}$ method shows the same liberality as in the random field case for clustered {processes}. On the other hand, it does not show liberality for the Poisson process and repulsive processes. The RS$_{G,torus}$ seems to remove the liberality of the torus correction completely. It also seems to have smaller power than RS$_{K,torus}$ which is for clustered processes partly explained by the liberality of RS$_{K,torus}$. 

{The RS$_{K,var}$ method is {by} far the most complicated method in terms of performance and application. This is due to the required knowledge of the true variance for every shifted point pattern. When the true pair-correlation function of each of the two superimposed point patterns is known, the variance can be approximated by Eq. \eqref{varratio}. An estimate of the pair-correlation function could be plugged into the variance estimator, however, it might add some approximation errors. Thus, we only test those cases where the theoretical pair-correlation functions of the superimposed point patterns are known. 
We used high-dimensional Monte Carlo integration with a uniform sample of $160.000$ points within each observation window ({corresponding to} the shifted point patterns). As expected, we have reasonable values for the empirical rejection rates, except for the {largest} scale case (S1) where the test {is}
%shows itself 
very conservative. The power, however, is not {very high}
%that good 
but always better than any other method {except of the torus correction.}}
	
\section{Data examples} %Jiri Jonatan

To show the methodology on real data we chose the tropical tree data sets from Barro Colorado Island plot \citep{Hubbell2005,Condit1998,Hubbell1999}.

\subsection{Random fields}

Four spatial covariates accompanying the tree data were chosen for our analysis. These were soil acidity and nitrogen, phosphorus and potassium content. The plots of these covariates can be seen in Figure~\ref{fig:cov}. We performed the tests of independence of each pair of these random fields, and the resulting $p$-values are given in Table~\ref{tab:RFs_real}. The tests were performed in the window 1000x500 m$^2$ with random shift vectors distributed uniformly on a disk with radius 250 m. For the exact variance method the range 500 m was used for variogram estimation. {Bandwidth for the kernel method was {chosen} to be 25, 50 or 75 meters, respectively.}

\begin{figure}
    \centering
    \includegraphics[scale=0.58]{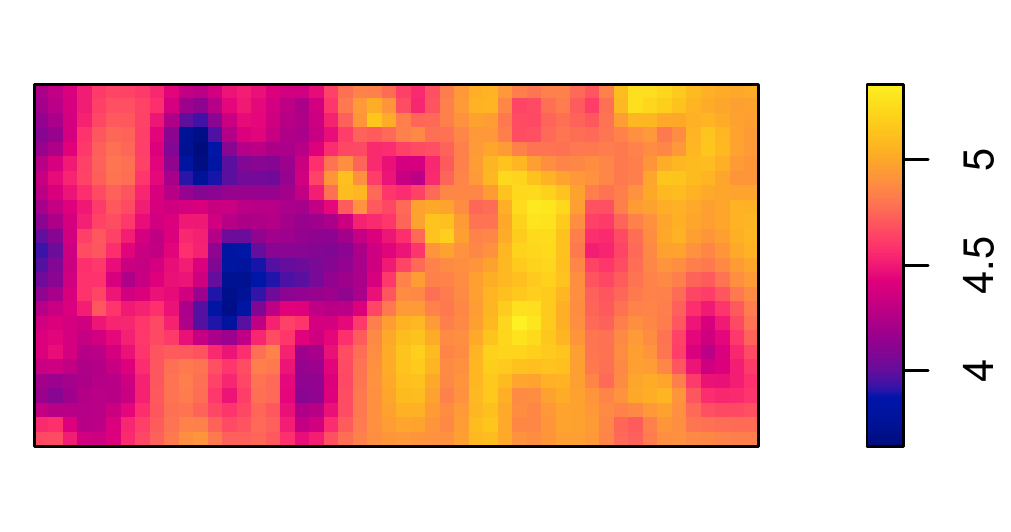}
    \includegraphics[scale=0.58]{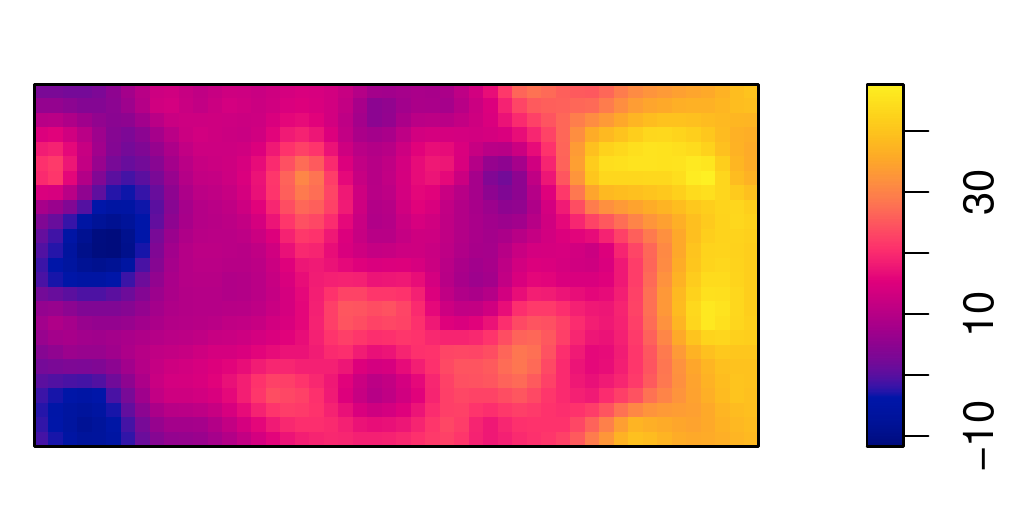}
    \includegraphics[scale=0.58]{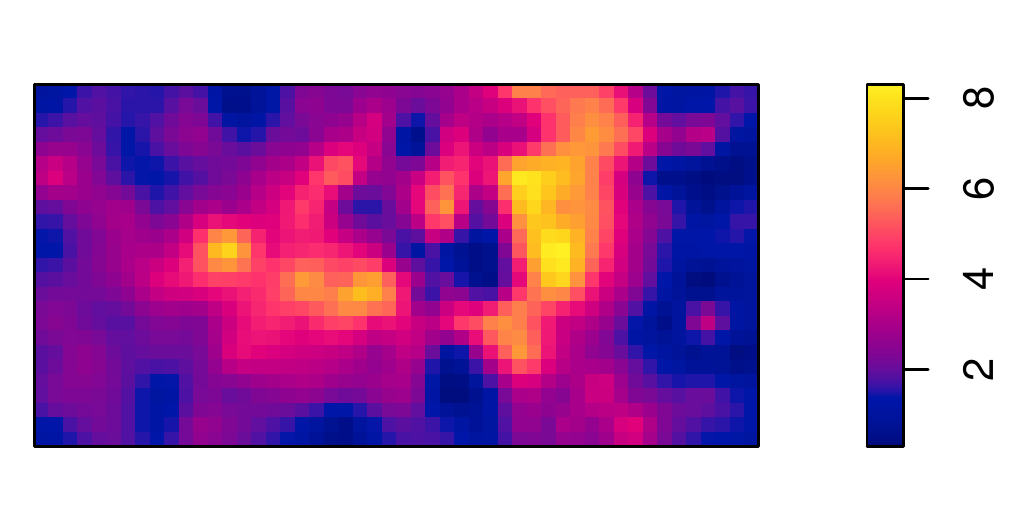}
    \includegraphics[scale=0.58]{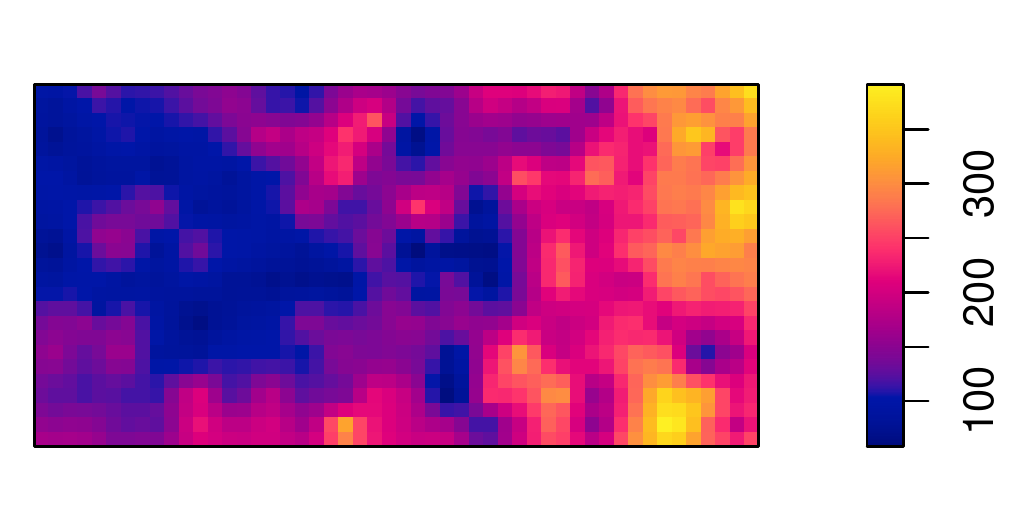}
    \caption{Covariates accompanying the BCI data set. Top row: acidity and nitrogen content, bottom row: phosphorus and potassium content, respectively.}
    \label{fig:cov}
\end{figure}

\begin{table}[tp]
  \centering
   {\footnotesize \begin{tabular}{| l | l | c | c | c | c | c | c | c | c |} \hline
    Method &  RS$_{tor.}$ & RS$_{min.}$ & RS$_{cou.}$ & RS$_{var}$ & $_{ker}(25)$ & $_{ker}(50)$ & $_{ker}(75)$ \\ \hline \hline 
    acidity--nitrogen & 0.018 & 0.408 & 0.578 & 0.708 & 0.882 & 0.890 & 0.868 \\ \hline
    acidity--phosphorus & 0.844 & 0.492 & 0.780 & 0.884 & 0.946 & 0.940 & 0.938 \\ \hline
    acidity--potassium & 0.012 & 0.730 & 0.032 & 0.622 & 0.152 & 0.170 & 0.160 \\ \hline
    nitrogen--phosphorus & 0.412 & 0.070 & 0.406 & 0.764 & 0.646 & 0.698 & 0.702 \\ \hline
    nitrogen--potassium & 0.094 & 0.292 & 0.034 & 0.290 & 0.798 & 0.790 & 0.784 \\ \hline
    phosphorus--potassium & 0.966 & 0.268 & 0.904 & 0.926 & 0.876 & 0.652 & 0.624 \\ \hline

  \end{tabular}}
  \caption{$p$-values of the test of independence of two specified random field serving as the covariates in the BCI data set.}
  \label{tab:RFs_real}
\end{table}

Two $p$-values were found significant at the 0.05 level for the proposed RS$_{count}$ method. Two $p$-values were also found significant by the RS$_{torus}$ method. When applying Bonferroni's correction for multiple testing in the set of 6 tests by a given method, no $p$-value was found significant.

\subsection{Point patterns}

For independence analysis of point patterns two related species (Inga Goldmanii, Inga Sapinoides) were chosen from the BCI data set. Figure~\ref{fig:BCIpatterns} shows the chosen species in the rectangular observation window with sides 500 and 1000 meters long. The total number of observed trees are 313 and 230, respectively.

\begin{figure}
    \centering
    \includegraphics[scale=0.8]{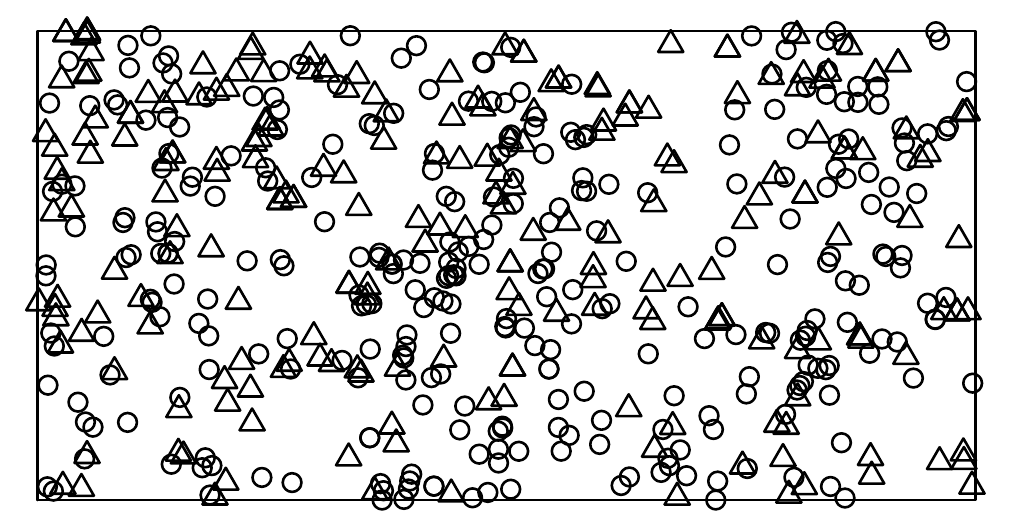}
    \caption{The positions of trees of two species selected from the BCI data set: Inga Goldmanii (circles), Inga Sapinoides (triangles).}
    \label{fig:BCIpatterns}
\end{figure}

Figure~\ref{fig:BCIK} shows the result of the global envelope test performed with the extreme rank length measure of the test of independence based on 50 values of the cross $K$-function in the range 0 to 75 meters and random shifts with torus correction. The random shift vectors were generated uniformly on a disk with radius 250 m. The test produces p-value 0.913. Other variants of the test with the cross $K$-function also did not give significant result.

\begin{figure}
    \centering
    \includegraphics[scale=0.9]{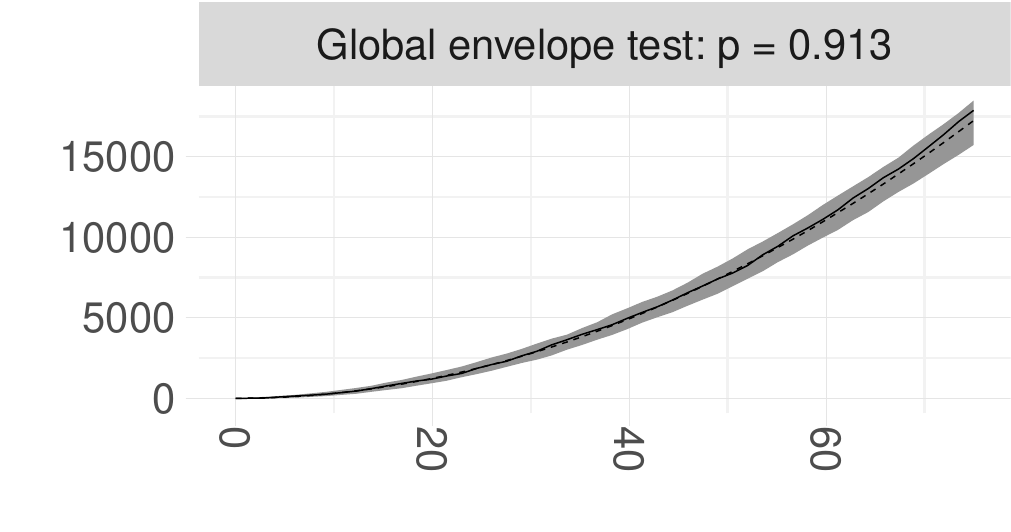}
    \caption{Global envelope test of independence of two point processes {(positions of trees of two species selected from the BCI dataset)}, based on  the cross $K$-function and random shifts with torus correction.}
    \label{fig:BCIK}
\end{figure}
To compare, the test based on $\mathbb{E} D_{12}$ {and torus correction} produces p-value 0.594. Other variants of the test also do not give significant result.

\section{Discussion} %Jorge

In this paper we {address the problem of liberality in the random shift with a torus correction permutation strategy}. We studied {a number of} permutation strategies and found out the random shift with the variance correction to be a suitable improvement of the torus correction in the random field case. It reduces the liberality and {achieves the {highest}} power from all variants we studied. It has even higher power than the method of \citet{Viladomat2014} for random fields with some {non-trivial} autocorrelation structure. To obtain the variance for the variance correction method, several approaches were studied. The best results were achieved with the asymptotic order of variance $1/n$ of the sample covariance, i.e. the RS$_{count}$ method. Thus this method can be recommended as the improvement of the torus correction in the random field case, {both for its simplicity and its performance}.

In the point process case we realized that the problem of deviations from exchangeability is more complex than for the random field case; {the variance correction method with kernel smoothing RS$_{K,ker}$ does not perform good enough, and the variance correction method with computed variance RS$_{K,var}$ is too complex and time consuming in order to be generally applied.} 

{However,} we have found another test statistic than the usual cross $K$-function---the mean cross nearest-neighbor distance $\mathbb{E} D_{12}$---which reduces the liberality of the torus correction but achieves slightly lower power than the cross $K$-function. $\mathbb{E} D_{12}$ has two advantages for the random shift strategy: first, it is only a scalar, second, it only considers the {nearest-neighbor} distance, thus it does not extend the dependence in a local test statistic more than it is necessary.
Therefore we can recommend, when the point patterns are repulsive or Poisson, to use the torus correction with the cross $K$-function, whereas in the case of clustered point patterns to use the torus correction with $\mathbb{E} D_{12}$. 

This work was originally motivated by studying the independence in more complex problems of finding a correlation between marks of a point process and covariates. This problem will be addressed in the future. Indeed, the RS$_{ker}$ approach is very general and can be applied to other types of problems as well, especially to the problems where the  information is known at the sampled point and is not accumulated from a neighborhood. The problem of marks and covariates fits into this setting.

\section*{Acknowledgements}

This work was supported by Grant Agency of Czech Republic (Project No.\ 19-04412S). The BCI forest dynamics research project was founded by S.P. Hubbell and R.B. Foster and is now managed by R. Condit, S. Lao, and R. Perez under the Center for Tropical Forest Science and the Smithsonian Tropical Research in Panama. Numerous organizations have provided funding, principally the U.S. National Science Foundation, and hundreds of field workers have contributed. The Barro Colorado Island soils data set was collected and analysed by J. Dalling, R. John, K. Harms, R. Stallard and J. Yavitt with support from National Science Foundation grants DEB021104, DEB021115, DEB0212284, DEB0212818 and
OISE 0314581, Smithsonian Tropical Research Institute and Center for Tropical Forest Science.

\section*{References}
%\bibliography{Tomas_Bibfile}

% \section*{Appendix}

\appendix
\section{Campbell theorems for independent point processes}\label{appendix:Campbell}

\begin{theorem}\label{thm:campbell1}
Let $\Phi$ and $\Psi$ be two independent stationary point processes on $\mathbb{R}^d$ with intensities $\lambda_1$ and $\lambda_2$, respectively, and let $h_1: \left(\mathbb{R}^d\right)^2 \rightarrow [0;\infty)$ be a measurable function. Then
\begin{align}
    \mathbb{E} \sum_{x \in \Phi} \sum_{y \in \Psi} h_1(x,y) = \lambda_1 \lambda_2 \int_{\left(\mathbb{R}^d\right)^2} h_1(u,v) \d u \d v.
\end{align}
\end{theorem}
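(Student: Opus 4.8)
The plan is to derive the formula by iterating the classical Campbell theorem for a single stationary point process (see e.g.\ \citet{IllianEtal2008}), once for $\Phi$ and once for $\Psi$, using the independence of the two processes to separate the two expectations and Tonelli's theorem --- legitimate since $h_1 \geq 0$ --- to exchange expectation and integration.

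Concretely, I would first condition on $\Psi$. For a fixed locally finite realization $\psi$ of $\Psi$, the map $u \mapsto g_\psi(u) := \sum_{y \in \psi} h_1(u,y)$ is a nonnegative measurable function on $\mathbb{R}^d$, so Campbell's theorem applied to the stationary process $\Phi$, whose intensity measure is $\lambda_1\,\d u$, together with the independence of $\Phi$ and $\Psi$, gives
\begin{align*}
    \mathbb{E}\Bigl[ \sum_{x \in \Phi} \sum_{y \in \Psi} h_1(x,y) \Bigm| \Psi = \psi \Bigr] = \mathbb{E} \sum_{x \in \Phi} g_\psi(x) = \lambda_1 \int_{\mathbb{R}^d} g_\psi(u)\,\d u = \lambda_1 \int_{\mathbb{R}^d} \sum_{y \in \psi} h_1(u,y)\,\d u.
\end{align*}
Taking the expectation over $\Psi$ and moving it inside the $\d u$-integral by Tonelli's theorem yields
\begin{align*}
    \mathbb{E} \sum_{x \in \Phi} \sum_{y \in \Psi} h_1(x,y) = \lambda_1 \int_{\mathbb{R}^d} \mathbb{E} \Bigl[ \sum_{y \in \Psi} h_1(u,y) \Bigr]\,\d u = \lambda_1 \int_{\mathbb{R}^d} \lambda_2 \int_{\mathbb{R}^d} h_1(u,v)\,\d v\,\d u,
\end{align*}
where the last equality is a second application of Campbell's theorem, now to the stationary process $\Psi$ with intensity measure $\lambda_2\,\d v$. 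This is exactly the claimed identity.

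An equivalent route is to note that, by independence, $\sum_{x \in \Phi}\sum_{y \in \Psi} \delta_{(x,y)}$ is a point process on $(\mathbb{R}^d)^2$ whose intensity measure is the product measure $(\lambda_1\,\d u) \otimes (\lambda_2\,\d v)$, and then to apply Campbell's theorem once on $(\mathbb{R}^d)^2$; identifying the product intensity measure is itself the Tonelli computation above. Either way, the only genuine technical point --- and the one I would expect to need care --- is the measure-theoretic bookkeeping: one should check joint measurability of $(u,\omega) \mapsto \sum_{y \in \Psi(\omega)} h_1(u,y)$ so that conditioning, Tonelli's theorem and Campbell's theorem may all be invoked. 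The assumption $h_1 \geq 0$ removes any integrability concern and allows both sides to equal $+\infty$ without further comment.
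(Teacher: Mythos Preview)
Your proof is correct. The paper actually takes what you describe as the ``equivalent route'': it defines the measure $\alpha(B_1\times B_2)=\mathbb{E}\sum_{x\in\Phi}\sum_{y\in\Psi}\mathbb{I}_{B_1}(x)\mathbb{I}_{B_2}(y)$, uses independence and the classical Campbell formula to factor this as $\lambda_1|B_1|\cdot\lambda_2|B_2|$, identifies $\alpha$ with the product measure $\lambda_1\lambda_2\,|\cdot|$ on $\mathcal{B}(\mathbb{R}^{2d})$ by uniqueness, and then extends from indicators to general nonnegative $h_1$ by the standard monotone-class arguments. Your primary route---conditioning on $\Psi$, applying Campbell to $\Phi$, then Tonelli, then Campbell to $\Psi$---is a more probabilistic and slightly more hands-on variant that avoids explicitly invoking uniqueness of product measures; the paper's version is a touch cleaner measure-theoretically and generalizes more transparently to the second-order statement (their Theorem~\ref{thm:campbell2}). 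Both reach the same conclusion with the same essential ingredients (independence, Campbell for each process, nonnegativity to justify interchanges), so there is no substantive gap.
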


\begin{proof}
First we define measures $\alpha_1, \alpha_2$ on $\mathcal{B}(\mathbb{R}^d)$ by $\alpha_1(B) = \lambda_1 |B|, \alpha_2(B) = \lambda_2|B|, B \in \mathcal{B}(\mathbb{R}^d),$ where $|\cdot|$ denotes the Lebesgue measure of appropriate dimension. We further define
\begin{align*}
    \alpha(B_1 \times B_2) = \mathbb{E} \sum_{x \in \Phi} \sum_{y \in \Psi} \mathbb{I}_{B_1}(x) \mathbb{I}_{B_2}(y), \; B_1, B_2 \in \mathcal{B}(\mathbb{R}^d).
\end{align*}
Using the classical Campbell formula and independence of $\Phi,\Psi$ we obtain for $B_1, B_2 \in \mathcal{B}(\mathbb{R}^d)$
\begin{align*}
    \alpha (B_1 \times B_2) = \mathbb{E} \sum_{x \in \Phi} \sum_{y \in \Psi} \mathbb{I}_{B_1}(x) \mathbb{I}_{B_2}(y) = \left( \mathbb{E} \sum_{x \in \Phi} \mathbb{I}_{B_1}(x) \right)\left( \mathbb{E} \sum_{y \in \Psi} \mathbb{I}_{B_2}(y) \right) = \lambda_1 |B_1| \cdot \lambda_2 |B_2|.
\end{align*}
Hence $\alpha$ is the product measure $\alpha_1 \otimes \alpha_2$ and from uniqueness of the product measure we have that $\alpha(C) = \lambda_1 \lambda_2 |C|$ for $C \in \mathcal{B}(\mathbb{R}^{2d})$. From this point we proceed using the standard measure theoretic arguments, as in the proof of the classical Campbell theorem, to obtain the claim.
\end{proof}

\begin{theorem}\label{thm:campbell2}
Let $\Phi$ and $\Psi$ be two independent stationary point processes on $\mathbb{R}^d$ with intensities $\lambda_1$ and $\lambda_2$ and pair correlation functions $g_1$ and $g_2$, respectively. Let $h_2:\left(\mathbb{R}^d\right)^4 \rightarrow [0;\infty)$ be a measurable function. Then
\begin{align}
    \mathbb{E} \sum_{x, x' \in \Phi} \sum_{y, y' \in \Psi} h_2(x,x',y,y') = & \lambda_1^2 \lambda_2^2 \int_{\left(\mathbb{R}^d\right)^4} h_2(u,u',v,v') g_1(u-u') g_2(v-v') \d u \d v \d u' \d v' \\
    & + \lambda_1^2 \lambda_2 \int_{\left(\mathbb{R}^d\right)^3} h_2(u,u',v,v) g_1(u-u') \d u \d v \d u' \nonumber \\
    & + \lambda_1 \lambda_2^2 \int_{\left(\mathbb{R}^d\right)^3} h_2(u,u,v,v') g_2(v-v') \d u \d v \d v' \nonumber \\
    & + \lambda_1 \lambda_2 \int_{\left(\mathbb{R}^d\right)^2} h_2(u,u,v,v) \d u \d v. \nonumber
\end{align}
\end{theorem}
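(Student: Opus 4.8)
The plan is to prove the identity by conditioning on one of the two processes and applying the classical second-order Campbell theorem for a \emph{single} stationary point process to each factor in turn. Throughout, the nonnegativity of $h_2$ is what renders every interchange of summation, expectation and integration legitimate via the Tonelli theorem, so no integrability hypotheses are needed.

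First I would record the two ingredients for a single stationary process $\Phi$ with intensity $\lambda_1$ and pair correlation function $g_1$: the first-order Campbell formula $\mathbb{E}\sum_{x\in\Phi} f(x) = \lambda_1 \int f(u)\,\d u$, accounting for the diagonal, and the second-order Campbell formula over distinct pairs, $\mathbb{E}\sum_{x,x'\in\Phi}^{\neq} f(x,x') = \lambda_1^2 \int\!\int f(u,u') g_1(u-u')\,\d u\,\d u'$. Combining these, the full double sum including the diagonal splits as
\begin{align*}
\mathbb{E}\sum_{x,x'\in\Phi} f(x,x') = \lambda_1^2 \int\!\int f(u,u')\, g_1(u-u')\,\d u\,\d u' + \lambda_1 \int f(u,u)\,\d u,
\end{align*}
and the analogous identity holds for $\Psi$ with $\lambda_2, g_2$.

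Next, using the independence of $\Phi$ and $\Psi$ I would write the expectation as an iterated expectation, conditioning first on $\Phi$. For a fixed realization of $\Phi$ the outer sum over $\Phi$ may be pulled out, and the inner $\Psi$-expectation applied to $(y,y') \mapsto h_2(x,x',y,y')$ via the displayed formula yields two contributions: one carrying $g_2(v-v')$, arising from the off-diagonal $y\neq y'$, and one from the diagonal $y=y'$. Taking then the expectation over $\Phi$ and applying the single-process formula once more to each of these two contributions produces, by the same diagonal/off-diagonal split, four terms in total. Matching them to the statement: the off-diagonal--off-diagonal term gives the $\lambda_1^2\lambda_2^2$ integral with $g_1 g_2$; the two mixed terms give the $\lambda_1^2\lambda_2$ integral with $g_1$ and the $\lambda_1\lambda_2^2$ integral with $g_2$, with the appropriate pair of variables collapsed; and the diagonal--diagonal term gives the final $\lambda_1\lambda_2$ integral.

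The main obstacle, exactly as for Theorem~\ref{thm:campbell1}, is the rigorous justification rather than the algebra. I would first establish the identity for $h_2 = \mathbb{I}_{B_1}\otimes\mathbb{I}_{B_2}\otimes\mathbb{I}_{B_3}\otimes\mathbb{I}_{B_4}$ with $B_i \in \mathcal{B}(\mathbb{R}^d)$, where the conditioning argument is transparent, and then extend to a general nonnegative measurable $h_2$ by the standard monotone-class and monotone-convergence machinery invoked at the close of the proof of Theorem~\ref{thm:campbell1}. The one genuinely delicate point is the bookkeeping of the diagonal contributions: one must treat the events $x=x'$ and $y=y'$ as independent dichotomies so that all four combinations appear, since omitting or double-counting a diagonal is precisely what separates the correct four-term answer from a naive product of two off-diagonal formulas.
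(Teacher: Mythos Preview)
Your proposal is correct and follows essentially the same route as the paper: exploit independence to factor the expectation, apply the single-process second-order Campbell formula (with its diagonal contribution) to each factor, and extend from indicators to general nonnegative measurable $h_2$ by the standard measure-theoretic argument. The only cosmetic difference is that the paper phrases the factorisation as identifying a product measure $\beta$ on $(\mathbb{R}^d)^4$ via rectangles $B_1\times B_2$ with $B_1,B_2\in\mathcal{B}(\mathbb{R}^{2d})$, whereas you phrase it as iterated conditioning and work with four-fold rectangles $B_1\times B_2\times B_3\times B_4$ with $B_i\in\mathcal{B}(\mathbb{R}^d)$; either generates the same $\sigma$-algebra and yields the same four-term answer.
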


\begin{proof}
Similarly to the proof of Theorem~\ref{thm:campbell1} we write for $B_1,B_2 \in \mathcal{B}(\mathbb{R}^{2d})$
\begin{align*}
    \beta(B_1 \times B_2) = & \mathbb{E}\sum_{x, x' \in \Phi} \sum_{y, y' \in \Psi} \mathbb{I}_{B_1}(x,x') \mathbb{I}_{B_2}(y,y') = \left( \mathbb{E}\sum_{x, x' \in \Phi} \mathbb{I}_{B_1}(x,x') \right)\left( \mathbb{E}\sum_{y, y' \in \Psi} \mathbb{I}_{B_2}(y,y') \right) \\
     = & \left( \int_{\left(\mathbb{R}^d\right)^2} \mathbb{I}_{B_1}(x,x') \lambda_1^2 g_1(x-x') \d x \d x' + \int_{\mathbb{R}^d} \mathbb{I}_{B_1}(x,x) \lambda_1 \d x \right) \\
     & \cdot \left( \int_{\left(\mathbb{R}^d\right)^2} \mathbb{I}_{B_2}(y,y') \lambda_2^2 g_2(y-y') \d y \d y' + \int_{\mathbb{R}^d} \mathbb{I}_{B_2}(y,y) \lambda_2 \d y \right).
\end{align*}
Having established $\beta$ as a product measure, we proceed by standard measure theoretic arguments to finish the proof.
\end{proof}

\end{document}